\newcommand\figcaption{\def\@captype{figure}\caption}
\newcommand\tabcaption{\def\@captype{table}\caption}
\newtheorem{theorem}{Theorem}
\newtheorem{lemma}{\textbf{Lemma}}
\begin{document}
%
\title{Chemical Reactions-Based Microfluidic Transmitter and Receiver for Molecular Communication}
%
%
%

\author{Dadi~Bi,~\IEEEmembership{Student Member,~IEEE,}
        Yansha~Deng,~\IEEEmembership{Member,~IEEE,}\\
        Massimiliano~Pierobon,~\IEEEmembership{Member,~IEEE,} 
        and~ Arumugam~Nallanathan,~\IEEEmembership{Fellow,~IEEE}
\thanks{D. Bi and Y. Deng are with the Department of Informatics, King’s College London, London, WC2R 2LS, U.K. (e-mail:\{dadi.bi, yansha.deng\}@kcl.ac.uk). (Corresponding author: Yansha Deng).}
\thanks{M. Pierobon is with the Department of Computer Science and Engineering, University of Nebraska–Lincoln, Lincoln, NE 68588, USA (e-mail: pierobon@cse.unl.edu).}
\thanks{A. Nallanathan is with the School of Electronic Engineering and Computer
	Science, Queen Mary University of London, London, E1 4NS, U.K. (e-mail:
	a.nallanathan@qmul.ac.uk).}}

\maketitle

\vspace{-26pt}
\begin{abstract}
The design of communication systems capable of processing and exchanging information through molecules and chemical processes is a rapidly growing interdisciplinary field, which holds the promise to revolutionize how we realize computing and communication devices. While molecular communication (MC) theory has had major developments in recent years, more practical aspects in designing components capable of MC functionalities remain less explored. Motivated by this, we design a microfluidic MC system with a microfluidic MC transmitter and a microfluidic MC receiver based on chemical reactions. Considering existing MC literature on information transmission via molecular pulse modulation, the proposed microfluidic MC transmitter is capable of generating continuously predefined pulse-shaped molecular concentrations upon rectangular triggering signals using chemical reactions inspired by how cells generate pulse-shaped molecular signals in biology. We further design a microfluidic MC receiver capable of demodulating a received signal to a rectangular output signal using a thresholding reaction and an amplifying reaction. Our chemical reactions-based microfluidic molecular communication system is reproducible and well-designed, and more importantly, it overcomes the slow-speed, unreliability, and non-scalability of biological processes in cells. To reveal design insights, we also derive the theoretical signal responses for our designed microfluidic transmitter and receiver, which further facilitate the transmitter design optimization. Our theoretical results are validated via simulations performed through the COMSOL Multiphysics finite element solver. We demonstrate the predefined nature of the generated pulse and the demodulated rectangular signal together with their dependence on design parameters. 
\end{abstract}

\vspace{-12pt}
\begin{IEEEkeywords}
\vspace{-12pt}	
Molecular communication, microfluidics, microfluidic transmitter, microfluidic receiver, chemical reaction, chemical circuits, genetic circuits.
\end{IEEEkeywords}

%
\IEEEpeerreviewmaketitle

\vspace{-12pt}	
\section{Introduction}
The possibility of harnessing information processing and communication functionalities from physical and chemical processes at the level of molecules has been at the basis of a great bulk of research in recent years on Molecular Communication (MC)~\cite{ifa_paradigm,Akyildiz15,Farsad16}. The physical processes of molecule propagation usually include diffusion and convection, which govern the molecule transport and can usually be described by a convection-diffusion equation \cite{berg1993random,bruustheoretical}. Meanwhile, chemical reactions may occur during molecule propagation via enzyme reaction \cite{6712164}, or at the reception of molecule via reversible absorption reaction \cite{yansha2016stochastic} or ligand binding reaction \cite{alberts2013essential}. To capture the molecule behaviour at any time, existing research has mainly focused on mathematically modelling and theoretical analysis of these physical and chemical processes, such as the channel response modelling \cite{yansha2016stochastic,7511443}, channel capacity calculation \cite{6305481,7541454}, and bit error probability derivation \cite{6712164,8633972}.

Despite substantial research outcomes in the above theoretical study, the design and prototyping of components with MC functionalities has been less explored except from some works \cite{farsad2013tabletop,7397863,giannoukos2018chemical,6630482,6668865,8418677,8255057}, partly because of the highly interdisciplinary technical knowledge and tools required to engineer these systems in practice. Existing MC prototypes can be classified into macroscale MC prototypes \cite{farsad2013tabletop,7397863,giannoukos2018chemical} and nanoscale or microscale MC prototypes \cite{6630482,6668865,8418677,8255057}. The macroscale testbeds in \cite{farsad2013tabletop,7397863,giannoukos2018chemical} considered the information sharing over a distance via alcohol and odor particles, but these macroscale testbeds are inapplicable or inappropriate to be operated in very small dimensions or in specific environment, such as in the water or in the human body. Besides, the detection of signaling molecules heavily relies on electrical devices, including sensors and mass spectrometry (MS), where the signal processing over chemical signals has been less explored in the molecular domain.

For microscale MC testbeds, the authors in \cite{6630482} proposed a Hydrodynamic Controlled Microfluidic Network (HCN) and demonstrated how to realize a pure hydrodynamic microfluidic switching function, where the successful routing of payload droplets was achieved by designing the geometry of microfluidic circuits. In \cite{6668865}, the genetically engineered \textit{Escherichia coli} (\textit{E. coli}) bacteria, housing in a chamber inside a microfluidic device, serves as a MC receiver using fluorescence detection upon the receipt of the signaling molecule C6-HSL. Note that the microfluidic channel in \cite{6668865} was only used as a propagation pathway for C6-HSL molecule, and the authors did not analytically evaluate the response of the C6-HSL molecule transport inside microfluidics. Furthermore, the microfluidic designs in \cite{6630482,6668865} did not realize any signal processing functions, such as modulation and demodulation, in molecular domain.

Signal processing functions performed over electrical signals or devices usually involves a highly complex procedure, and the utilization of electrical devices faces challenges, such as unbiocompatibility and invasiveness, for biomedical-related applications. This motivates us to perform signal processing directly over chemical signals. In general, signal processing functions over chemical signals can be achieved using two approaches: 1) biological circuits \cite{weiss2003genetic} in engineered living cells, and 2) chemical circuits \cite{cook2009programmability} based on “non-living” chemical reactions. 
Existing works in \cite{8418677} have already designed biological circuits to realize the parity-check encoder and decoder. However, the utilization of biological cells for MC currently faces challenges such as slow speed, unreliability, and non-scalability, which motivates our initial work \cite{8255057}. In \cite{8255057}, we designed a chemical reaction-based transmitter for MC, where the transmitter is capable of generating a molecular concentration pulse upon a rectangular triggering signal, thus realizing the modulation function. This work is inspired by how cells generate pulse-shaped molecular signals in biology, and motivated by a bulk of MC literature on information transmission via molecular pulse modulation~\cite{Pehlivanoglu17,yansha2017stochastic}. In this paper, we expand our previous work and make the following contributions:
\begin{itemize}
	\item We first present our designed microfluidic transmitter capable of generating a molecular concentration pulse upon a rectangular triggering signal. To ensure the successfully pulse generation, we define three chemical reactions, where the sequence of each reaction is controlled by the microfluidic channel geometry.  	
	\item We then propose the microfluidic receiver design capable of demodulating a received signal to a rectangular output signal. This demodulation is realized via two chemical reactions, where a thresholding reaction is proposed to first deplete the received signal below the threshold, and an amplifying reaction converts the residual received signal into the output signal.	
	\item  Unlike \cite{6668865}, our microfluidic design is supported by both analytical and numerical simulation results. We derive the channel responses of the straight convection-diffusion-reaction channels with rectangular and Gaussian inlet concentration, which can be reduced to that of straight convection-diffusion channels.
	\item 
	To optimize the system performance, we propose a reaction channel length optimization flow to provide detailed instructions on how to control the maximum concentration of a generated pulse, which provides an insight into the dependence of the maximum concentration on design parameters. In addition, we analyse the restricted time gap between two consecutive input signals to ensure a continuous transmission of non-distorted pulses. Finally, the analytical results are validated against simulations performed in the COMSOL Multiphysics finite element solver.
\end{itemize}

The rest of the paper is organized as follows. In Sec.~\ref{sec:II}, we present the microfluidic transmitter and receiver design in terms of chemical reactions and microfluidic components. 
Sec. \ref{sec:m} introduces microfluidic characteristics and theoretically analyses convection-diffusion channels and convection-diffusion-reaction channels. In Sec. \ref{sec:TX} and \ref{sec:RX}, we not only present the analysis and design for the proposed microfluidic transmitter and receiver, respectively, but also provide numerical simulation results performed in COMSOL Multiphysics. In Sec. \ref{MC}, we combine the microfluidic transmitter with the receiver to show a basic end-to-end MC system. Finally, Sec.~\ref{sec:conclusion} concludes the paper.

\vspace{-12pt}	
\section{System Model}
\label{sec:II}
The overall scheme of the proposed transmitter and receiver for MC is shown in Fig.~\ref{fig:overall_scheme}.
\begin{figure}[!tb]
	\centering
	\includegraphics[width=6.6in]{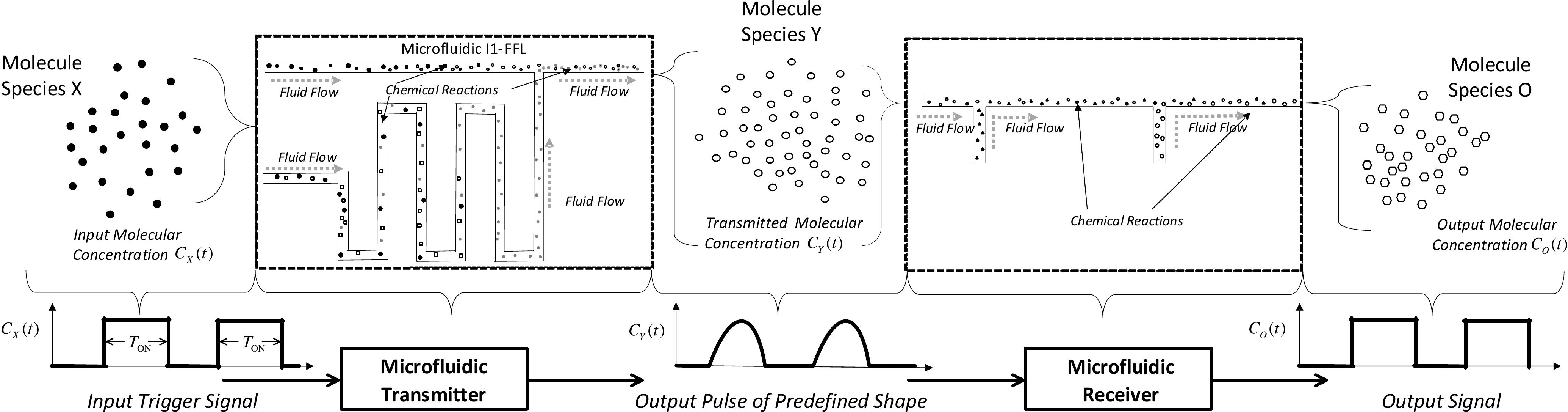}
	\caption{Overall scheme of the proposed transmitter and receiver for MC.}
	\label{fig:overall_scheme}
\end{figure} 
At the microfluidic transmitter, a rectangular input molecular signal composed of the molecular species $X$ in a fluid with concentration $C_X(t)$ 
enters the microfluidic transmitter that upon a variation in $C_X(t)$ produces an output another molecular signal composed of molecular species $Y$ with concentration $C_Y(t)$ by following a predefined pulse shape. After diffusion of emitted pulse $C_Y(t)$, a microfluidic receiver is designed to demodulate the received pulse to a rectangular output signal using species $O$ with concentration $C_O(t)$. Here, both the pulse shape and the demodulated signal shape are dependent on the values of parameters in the microfluidic device implementation. As the fluids flow through microfluidic device channels, a series of chemical reactions occur to generate the molecules of species $Y$ and species $O$, which guarantee the successful pulse generation and the signal demodulation. In the following, we first introduce these chemical reactions at the transmitter side and receiver side, and then describe the microfluidic components of the transmitter and receiver.
\vspace{-10pt}	
\subsection{Chemical Reactions Design for the Microfluidic MC Devices}

\subsubsection{\textbf{Chemical Reactions Design for the Microfluidic Transmitter}}
Gene regulatory networks are sets of interconnected biochemical processes in a biological cell~\cite{karlebach2008modelling}, where DNA genes are linked together by activation and repression mechanisms of  certain biological macromolecules that regulate their expressions into proteins. Each DNA gene contains coding sequences and regulatory sequences, which are sites the proteins (transcription factor) can bind and control the rate of the gene expression, either by increasing (activation) or decreasing (repression) the rate of protein synthesis. In gene regulatory networks, genes are interconnected such that the proteins produced by one or more genes regulate the expression of one or more genes, which results in complex protein expression dynamics.

Gene regulatory networks can be abstracted with nodes representing the genes, interconnected by directed edges that correspond to the control of a gene (edge destination) expression by a transcription factor encoded by another gene (edge source). Network motifs are patterns of nodes and directed edges that occur more frequently in natural gene transcription networks than randomized networks~\cite{milo2002network}.  The Feed Forward Loop (FFL) is a family of network motifs among all three-node patterns frequently observed in nature~\cite{milo2002network,alon2007network}. In the structure of FFL, the transcription factor protein $X$ regulates the genes expressing other two proteins, namely, $P$ and $Y$, where $P$ is also a transcription factor that regulates the gene expressing protein $Y$. Depending on the types of these regulations, either activation or repression, there are 8 different FFLs~\cite{alon2006introduction}.

\begin{figure}[t]
	\centering
	\includegraphics[width=0.5in]{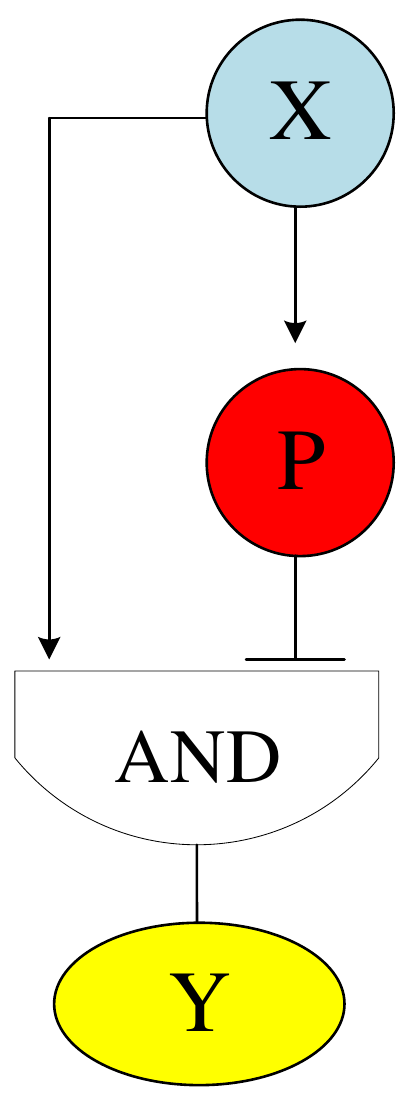}
	\caption{The I1-FFL network motif.}
	\label{fig:network_motif}
\end{figure} 
Among all the FFLs found in nature, the I1-FFL results in a pulse-like dynamics of its output $Y$~\cite{alon2007network}. As shown in Fig. \ref{fig:network_motif},
an input gene expresses the protein $X$, which is a transcription factor for the genes expressing $Y$ and $P$. In presence of $X$, the expressions of the genes encoding protein $Y$ and protein $P$ are activated, resulting in the build up of the concentrations of protein $Y$ and protein $P$, respectively. On its turn, the protein $P$ is another transcription factor that works as a repressor for the gene encoding protein $Y$. The AND input to the gene that encodes $Y$ corresponds to a situation where this gene is activated when the transcription factor $X$ binds to the regulatory sequence, but it is inactivated whenever transcription factor $P$ binds to the same sequence independently from the presence of $X$. In such a way, protein $X$ initializes the rapid expression of the gene encoding protein $Y$ first, and after a delay, enough $P$ accumulates and represses the production of protein $Y$, whose concentration will continuously decrease because of natural degradation. This generates a pulse shape for the concentration of protein $Y$ as a function of the time.

One example of I1-FFL is the galactose system of \textit{E. coli}, where the galactose utilization operon (a cluster of genes sharing the same regulatory sequences and expressed together) $gal$ETK is regulated in an I1-FFL by the activator $CRP$ ($X$), and the repressor $galS$ ($P$) \cite{mangan2006incoherent}. Results showed that in nature we can observe a pulse-like expression of the $gal$ETK genes, which is initiated by a step variation of active $CRP$ mediated by the molecular species $cAMP$.

In this paper, we take inspiration from the I1-FFL to design a transmitter in the molecular domain. Although the discipline of synthetic biology is opening the road to the programming of functionalities in the biochemical environment through genetic engineering of biological cells~\cite{Kahl13}, there are a number of factors that encouraged an alternative technology for the design of a MC transmitter in this paper, such as the small number of molecules involved for each cell together with difficulties in coordinating multiple cells, the added complexity of cellular behavior, including cell growth, evolution, and biological noise, and the slow response time of genetic regulatory networks such as the I1-FFL, whose output pulse shape is usually realized in nature in the order of cell generation time (hours) as indicated in~\cite[Fig. 4]{mangan2006incoherent}.

Inspired by the I1-FFL mechanism in gene regulation networks, we explore the realization of I1-FFL via mass action chemical reactions, \textit{i.e.},  processes that convert one or more input molecules  (\textit{reactants}) into one or more output molecules (\textit{products}). Reactions may proceed in forward or reverse directions, which are characterized by forward ($k_f$) and reverse ($k_r$) reaction rates, respectively. Within the scope of this paper, we assume unbalanced reactions where the forward reaction rate is much greater than the reverse rate. A chemical reaction network is defined as a finite set of  reactions involving a finite number of species~\cite{cook2009programmability}, where these reactions occur in a well-stirred environment, aiming to realize a function or algorithm via mass action chemical reactions. Specific chemical reaction networks have already been designed for signal restoration, noise filtering, and finite automata, respectively, through a discipline known as molecular programming~\cite{Klinge16modu}.

To execute the same functionality of an I1-FFL with a chemical reaction network, we define three chemical reactions as follows:
\begin{align}
\text{Reaction I}: X + {S_y} \to Y, \label{CR1}\\  
\text{Reaction II}: X + {S_p} \to P,\label{CR2} \\ 
\text{and} \;\;\text{Reaction III}:   Y +P\to Z,  \label{CR3}
\end{align}
where these reactions involve the input molecular species $X$, the molecular species ${S_p}$ and ${S_y}$, the intermediate product molecular species $P$, and the output molecular species $Y$.

In the I1-FFL gene regulation network, the active $X$ first activates the gene expressing the protein $Y$, and
only when $P$ accumulates sufficiently, it suppresses the expression of the protein $Y$, generating the aforementioned pulse-like concentration signal. Here, the molecular species  $X$, ${S_p}$, and ${S_y}$ are only injected at $t=0$, and the chemical reactions in \eqref{CR1}, \eqref{CR2}, and \eqref{CR3} happen simultaneously with a much quicker speed under well-stirred environment than that of the I1-FFL gene regulation network dynamics, which may not result in the pulse-like output signal $Y$ when these three reactions have the same reaction rate. One way to cope with it is to adjust the reaction rate to be different among these reactions.

However, in practice, we want to design the molecular communication system with the pulse-like output triggered by the rectangular pulse input representing bit-1 transmission. In such a way, the output pulse only occurs inside the duration of a rectangular pulse input, and every bits are modulated to their corresponding pulses as shown in Fig. \ref{fig:overall_scheme}. To control the rectangular pulse input signals,  the sequence of each reaction, and the delayed arrival of product $P$ after Reaction II in~\eqref{CR2},  we propose a microfluidic transmitter to realize the same functionality of I1-FFL as in gene regulation network in Fig.~\ref{bf2} and containing the reactions \eqref{CR1}, \eqref{CR2}, and \eqref{CR3}.

\subsubsection{\textbf{Chemical Reactions Design for the Microfluidic Receiver}} \label{receiver_reaction}
According to the demodulation requirement of traditional communication systems, we aim to design a microfluidic receiver capable of demodulating the received pulse to a rectangular signal. To do so, we design the chemical reactions as follows:
\begin{align}
\text{Reaction IV}: Y + ThL \to Waste, \label{CR4}  \\
\text{and} \;\;\text{Reaction V}: Y+Amp {\to} Y+O,\label{CR5} 
\end{align}
where these reactions involve the input molecular species $Y$, the molecular species $ThL$ and $Amp$, intermediate product molecular species $Waste$, and the output molecular species $O$. Once the species $Y$ arrives at the receiver, the $\text{Reaction IV}$ is immediately activated, resulting in a depletion of species $Y$ that is below the concentration of species $ThL$. Then, any remaining $Y$ catalyses the conversion of species $Amp$ into the output species $O$. Obviously, output species $O$ will only be produced when the concentration of $Y$ is greater than the concentration of $ThL$, so we regard the concentration of $ThL$ as a threshold and name $\text{Reaction IV}$ as the thresholding reaction. $\text{Reaction V}$ refers to an amplifying reaction. Similar to the chemical reactions at the transmitter, the sequence of $\text{Reaction IV}$ and $\text{Reaction V}$ is controlled by the microfluidic receiver geometry design, which will be presented next.
\vspace{-12pt}	
\subsection{Microfluidic Device Design}
In this subsection, we describe each component of our proposed microfluidic transmitter and receiver, in Fig.~\ref{bf2}. 
	\begin{figure}[!tb]
	\centering
	\includegraphics[width=6.5in]{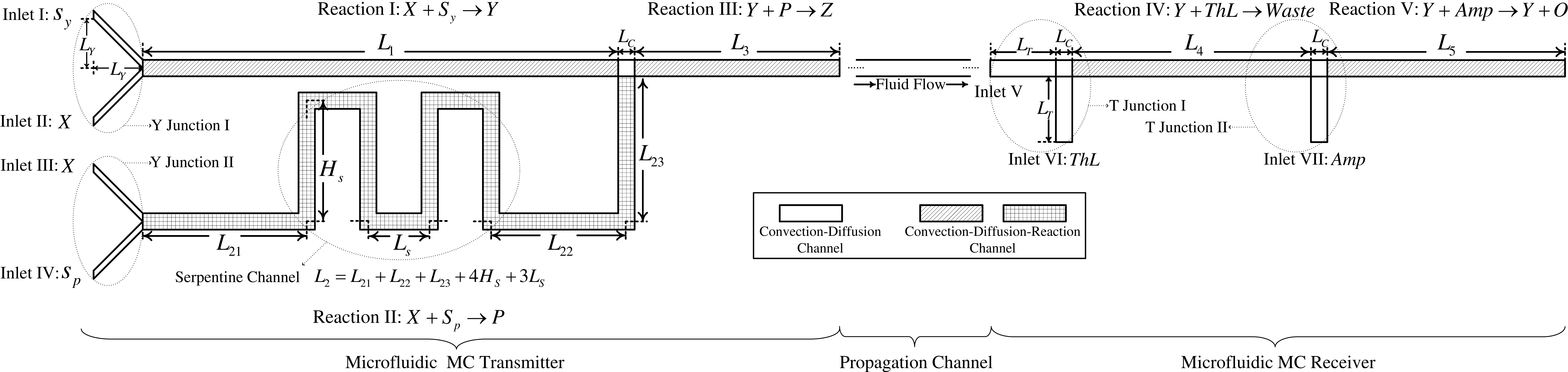}
	\caption{Novel Design of the microfluidic MC transmitter and receiver.}
	\label{bf2}
	\end{figure} 
A microfluidic device is a system that can process or manipulate small ($10^{-9}$ to $10^{-18}$ litres) amount of fluids using channels in dimensions of tens to hundreds of micrometres \cite{whitesides2006origins}. Recently, an increasing number of biological and chemical experiments are conducted in microfluidic or lab-on-a-chip (LOC) devices, due to inherent advantages in miniaturization, integration, portability and automation with low reagents consumption, rapid analysis, and high efficiency \cite{li2013microfluidic}. According to whether a chemical reaction occurs in a microfluidic channel, we classify microfluidic components as two types: 1) convection-diffusion channel, and 2) convection-diffusion-reaction channel. 
\subsubsection{\textbf{Convection-Diffusion Channel}}
\begin{itemize}
	\item \textbf{Y Junction at the microfluidic transmitter:} 
	The reactions between reactants require mixing to occur in a short distance, which can be facilitated by diffusion in Y junctions. Y junctions are configured by one outlet and two inlets, \textit{i.e.}, Y junction I and Y junction II in Fig.~\ref{bf2}, where the outlet width is doubled compared with each inlet width, and the angle between the main channel and the first inlet starting anticlockwise from the main channel is $145^o$. The fluid flow containing input reactant $X$ with concentration $C_{X_0}^{\text{II}}$ and $C_{X_0}^{\text{III}}$ is injected into the Inlet II and Inlet III using syringe pumps, which can be described by a rectangular pulse signal, as in Fig.~\ref{fig:overall_scheme}, with the width equalling to the length of injection time $T_{\rm{ON}}$, whereas the reactant $S_y$ with concentration $C_{S_{y_0}}^{\text{I}}$ and reactant $S_p$ with concentration $C_{S_{p_0}}^{\text{IV}}$ are continuously injected into Inlet I and Inlet IV, respectively. By doing so, the flows from Inlet I and Inlet IV can flush the microfluidic device continuously without influencing Reaction III in~\eqref{CR3}.	
	\item \textbf{T Junction at the microfluidic receiver:} T junctions are chosen at the receiver 
	equipping with the same functionality as Y junctions. A T Junction has one outlet and two inlets, \textit{i.e.}, T junction I and T junction II in Fig.~\ref{bf2}, where the angle between the second inlet starting anticlockwise from the first inlet is $90^o$, and one inlet of T junction II is merged into a convection-diffusion-reaction channel. After diffusion, the transmitted molecules from microfluidic transmitter propagate to enter the receiver, and the reactant $ThL$ with concentration $C_{ThL}^{\text{VI}}$ and $Amp$ with concentration $C_{Amp}^{\text{VII}}$ are continuously injected into the Inlet VI and Inlet VII, respectively.
	\item \textbf{Straight Convection-Diffusion Channel:} This channel is used to connect the transmitter with the receiver and provides a propagation pathway for a generated pulse. 
\end{itemize}

\subsubsection{\textbf{Convection-Diffusion-Reaction Channel}}
For simplicity, in the following, 
we refer to the channel in which Reaction $i$ happens as the Reaction $i$ channel.
	\begin{itemize}
			\item \textbf{Transmitter}
				\subitem \textbf{Straight Reaction I channel:} The outflow of Y junction I passes through the Reaction I channel with length $L_1$ to realize the Reaction I in \eqref{CR1} to generate the output signal $Y$.
				\subitem \textbf{Serpentine Reaction II channel:} The outflow of Y junction II passes through the Reaction II channel to generate $P$ according to the Reaction II in \eqref{CR2}. To realize the pulse-shaped concentration of emitted signal $Y$, the Reaction II channel is designed to be longer than the Reaction I channel, with the result of delaying the contact between species $P$ and $Y$, and therefore delaying the Reaction III. Furthermore, a serpentine channel is designed and replaced a straight reaction channel to delay the arrival of species $P$ in a compact space within the microfluidic transmitter. The width and height of the serpentine channel is denoted as $L_s$ and $H_s$, respectively. The design in Fig.~\ref{bf2} is conventionally denoted as containing 2 delay lines, due to its two bended tubes with height $H_s$ in the serpentine channel. The equivalent straight channel length of this serpentine channel is denoted as $L_2$ and can be calculated as $L_2=L_{21}+L_{22}+L_{23}+4H_{s}+3L_{s}$.
				\subitem \textbf{Straight Reaction III channel:} Once $P$ arrives at the Reaction III channel with length $L_{3} $, Reaction III in \eqref{CR3} occurs to decrease the output signal $Y$.

			\item \textbf{Receiver} 
			\subitem \textbf{Straight Reaction IV channel:} The outflow of T junction I flows through the Reaction IV channel with length $L_4$ to deplete $Y$ below the concentration of species $ThL$ according to Reaction IV in \eqref{CR4}.
			\subitem \textbf{Straight Reaction V channel:} When the remaining $Y$ arrives at the Reaction V channel with length $L_5$, Reaction V in \eqref{CR5} is activated to convert the species $Amp$ into output species $O$.

	\end{itemize}

\vspace{-12pt}	
\section{Basic Microfluidic Channel Analysis}
\label{sec:m}

In this section, we first describe the basic characteristics of microfluidics, and then use 1D model to approximate and derive analytical expressions for convection-diffusion channels and convection-diffusion-reaction channels. Numerical results are provided to verify our theoretical analysis.

\vspace{-10pt}	
\subsection{Basic Characteristics of Microfluidics}
The nature of the flow highly depends on the Reynolds
number, which is the most famous dimensionless parameter
in fluid mechanics. For flow in a pipe, the Reynolds number
is defined as \cite{bruustheoretical}
\begin{equation}
	\text{Re}=\frac{\rho v_{\text{eff}}D_H}{\mu},
\end{equation}
where $\rho$ is the fluid density, $v_{\text{eff}}$ is the fluid mean velocity, $D_H$ is the hydraulic diameter of the channel, and $\mu$ is the constant fluid viscosity. When we scale down standard laboratory channels from decimeter scale to microscopic scale, Reynolds number is usually very small (Re $<1$), which indicates that flows become laminar flows, such that an ordered and regular streamline pattern can be experimentally observed \cite{di2009inertial}. Applying a long, straight, and rigid microfluidic channel  to a flow and imposing a pressure difference between the two ends of the channel, the flow is referred to as the Poiseuille flow \cite{bruustheoretical}.
When the cross section of the microfluidic channel is circle-shaped , the flow velocity profile can be described as
\begin{align}\label{velocity_profile}
	v(r)=2v_{\text{eff}}(1-\frac{r^2}{R^2}),
\end{align}
where $r$ is the radial distance, and $R$ is the radius of the cross section.

 \vspace{-10pt}	
\subsection{Convection-Diffusion Channels}
\label{sec:convection_diffusion}
For one type of molecular species flowing in a 3D straight convection-diffusion channel with rectangular cross section whose height is $h$ and width is $w$, its concentration $C(x,y,z,t)$ can be described by the 3D convection-diffusion equation as \cite{stocker2011introduction}
\begin{equation}
\frac{\partial C(x,y,z,t)}{{\partial}t}=D \nabla^2 C(x,y,z,t)-\boldsymbol{v}
\cdot \nabla C(x,y,z,t),
\label{b1}
\end{equation}
where $\nabla$ is the Nabla operator, and $\boldsymbol{v}$ is the flow velocity that can be solved by Navier-Stokes equation \cite{bruustheoretical}. When the flow falls into dispersion regime, the interaction between cross-sectional diffusion and non-uniform convection can lead to an uniform molecule distribution along the cross-section, \textit{i.e.}, $\frac{\partial C(x,y,z,t)}{{\partial}y}=\frac{\partial C(x,y,z,t)}{{\partial}z}=0$, such that (\ref{b1}) can be simplified into a 1D convection-diffusion equation \cite{wicke2018modeling}
\begin{equation}
\frac{\partial C(x,t)}{{\partial}t}=D_{\text{eff}}\frac{\partial^2 C(x,t)}{{\partial}x^2}-v_{\text{eff}}\frac{\partial C(x,t)}{{\partial}x},
\label{b2}
\end{equation}
where $D_{\text{eff}}=(1+\frac{8.5{v_{\text{eff}}^2}{h^2}{w^2}}{210D^2({h^2}+2.4hw+{w^2})})$ is the \textit{Taylor-Aris} effective diffusion coefficient \cite{bicen2014end}. 

\vspace{-10pt}	
\subsection{Convection-Diffusion-Reaction Channels}
\label{tx_reaction_channel}

Unlike a convection-diffusion channel, the molecular transport is not only affected by convection-diffusion, but also affected by reactions in a reaction channel. To quantitatively describe the chemical reaction and dispersion of molecules at a straight microfluidic channel, we introduce the 1D convection-diffusion-reaction equation. For a general reaction $A+B\to AB$, the spatial-temporal concentration distribution of species $A$ and $AB$ can be described as
\begin{align}
& \frac{{\partial {C_A}(x,t)}}{{\partial t}}={D_\text{eff}}\frac{{{\partial ^2}{C_A}(x,t)}}{{\partial {x^2}}}- {v_\text{eff}}\frac{{\partial {C_A}(x,t)}}{{\partial x}}-k{C_A}(x,t){C_{{B}}(x,t)}  , \label{reaction1}
\\& \frac{{\partial {C_{AB}}(x,t)}}{{\partial t}}= \underbrace{{D_\text{eff}}\frac{{{\partial ^2}{C_{AB}}(x,t)}}{{\partial {x^2}}}}_{\text{Diffusion}} - \underbrace{{v_\text{eff}}\frac{{\partial {C_{AB}}(x,t)}}{{\partial x}}}_{\text{Convection}} + \underbrace{{k}{C_A}(x,t){C_{{B}}(x,t)}}_{\text{Reaction}}, \label{reaction1Y}
\end{align}
where $k$ is the rate constant. Assuming species $B$ with concentration $C_{B_0}$ is continuously injected at the inlet of the channel at $x=0$ and $t=0$ with velocity $v_\text{eff}$, we solve the above convection-diffusion-reaction equations in the following two theorems when species $A$ is injected with a rectangular concentration profile and a Gaussian concentration profile.
\vspace{-10pt}	
\begin{theorem}
	\label{the1}
	With species $A$ following a rectangular concentration distribution 
	\begin{align}
		C_{A}(0,t)=C_{A_0}[u(t)-u(t-T_\text{ON})]
	\end{align}
	being injected at the inlet of a straight microfluidic channel at $x=0$ and $t=0$ using velocity $v_\text{eff}$, the concentration distributions of $A$ and $AB$ are derived as 
	\begin{equation}\label{b4}
	{C_A}(x,t)=\begin{cases}
	g(x,t), &0\leq t \leq T_\text{ON}\\
	g(x,t)-g(x,t-T_\text{ON}), &t> T_\text{ON},
	\end{cases}
	\end{equation}
	and
	\begin{equation}\label{b5}
	{C_{AB}}(x,t)=\begin{cases}
	h(x,t)-g(x,t), &0\leq t \leq T_\text{ON}\\
	[h(x,t)-g(x,t)]-[h(x,t-T_\text{ON})-g(x,t-T_\text{ON})], &t> T_\text{ON},
	\end{cases}
	\end{equation}
	where 
	$g(x,t) = \frac{C_0}{2}\left\{ {\text{exp} \left[ {\frac{{\left( {{v_\text{eff}} - \alpha } \right)x}}{{2{D_\text{eff}}}}} \right]\text{erfc}\left[ {\frac{{x - \alpha t}}{{2\sqrt {{D_\text{eff}}t} }}} \right]} \right.
	\nonumber  \left. { + \text{exp} \left[ {\frac{{\left( {{v_\text{eff}} + \alpha } \right)x}}{{2{D_\text{eff}}}}} \right]\text{erfc}\left[ {\frac{{x + \alpha t}}{{2\sqrt {{D_\text{eff}}t} }}} \right]} \right\}$,
	$h(x,t)=\frac{C_{A_0}}{2}[\text{erfc}(\frac{x-{v_\text{eff}}t}{2\sqrt{D_{\text{eff}}}t})+ e^{\frac{{v_\text{eff}}x}{D_{\text{eff}}}}   \text{erfc}(\frac{x+{v_\text{eff}}t}{2\sqrt{D_{\text{eff}}}t})]$
	with ${C_{0}}=\min\left\{C_{A_0},C_{B_0} \right\}$ and $\alpha  = \sqrt {{{v_\text{eff}}^2} + 4{k}{C_{0}}{D_\text{eff}}}  $.
\end{theorem}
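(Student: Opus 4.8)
The plan is to tame the nonlinear coupling $kC_A C_B$ in \eqref{reaction1}--\eqref{reaction1Y} by a pseudo-first-order linearization, and then solve the resulting linear problem by Laplace transform together with a mass-conservation argument. Since species $B$ is continuously replenished at the inlet, I would treat its concentration as effectively pinned at the limiting value $C_0=\min\{C_{A_0},C_{B_0}\}$, so that the bilinear sink $kC_A C_B$ collapses to a linear decay term $kC_0 C_A$. Under this approximation \eqref{reaction1} decouples from \eqref{reaction1Y} and becomes the constant-coefficient convection-diffusion equation with first-order reaction
\[
\frac{\partial C_A}{\partial t}=D_\text{eff}\frac{\partial^2 C_A}{\partial x^2}-v_\text{eff}\frac{\partial C_A}{\partial x}-kC_0 C_A .
\]

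First I would solve this for the \emph{step} excitation $C_A(0,t)=C_0$, $t\ge0$, with $C_A(x,0)=0$ and $C_A(\infty,t)=0$. Applying the Laplace transform in $t$ turns the PDE into the ODE $D_\text{eff}\tilde C_A''-v_\text{eff}\tilde C_A'-(s+kC_0)\tilde C_A=0$; discarding the root that grows as $x\to\infty$ gives $\tilde C_A(x,s)=\tfrac{C_0}{s}\exp\!\big(\tfrac{v_\text{eff}-\sqrt{v_\text{eff}^2+4D_\text{eff}(s+kC_0)}}{2D_\text{eff}}\,x\big)$. Inverting this transform (a standard convection-dispersion-with-decay inversion of Ogata--Banks/van Genuchten type) produces exactly the two-erfc expression $g(x,t)$, with the shifted characteristic speed $\alpha=\sqrt{v_\text{eff}^2+4kC_0 D_\text{eff}}$ appearing precisely as the square root in the transformed root. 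The rectangular pulse $C_A(0,t)=C_{A_0}[u(t)-u(t-T_\text{ON})]$ is then handled by linearity: it is the superposition of a positive step at $t=0$ and a negative step at $t=T_\text{ON}$, which immediately yields the piecewise form \eqref{b4}, i.e. $g(x,t)$ for $0\le t\le T_\text{ON}$ and $g(x,t)-g(x,t-T_\text{ON})$ afterwards.

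For the product $C_{AB}$ I would avoid solving a second forced PDE and instead exploit conservation. Adding \eqref{reaction1} and \eqref{reaction1Y}, the reaction terms cancel, so $C_A+C_{AB}$ satisfies the \emph{pure} convection-diffusion equation \eqref{b2}. Because no $AB$ is injected, its inlet condition is $C_A(0,t)+C_{AB}(0,t)=C_{A_0}$, i.e. the same rectangular drive but with amplitude $C_{A_0}$; hence $C_A+C_{AB}$ is exactly the convection-diffusion step/pulse response, which is the function $h(x,t)$, whose pulse version is again obtained by the time-shift subtraction. Subtracting the already-found $C_A=g$ then gives $C_{AB}=h-g$ and its piecewise form \eqref{b5}.

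The routine parts are the ODE solution, the root selection enforcing decay at infinity, and the time-shift superposition. The genuine obstacle is the very first step: justifying the replacement of $kC_A C_B$ by $kC_0 C_A$ and the specific choice $C_0=\min\{C_{A_0},C_{B_0}\}$. This is exact only in the regime $C_{A_0}\le C_{B_0}$, where $C_0=C_{A_0}$ and the inlet data for $g$ and $h$ coincide, so $C_{AB}(0,t)=0$ as physically required; for $C_{A_0}>C_{B_0}$ it is a limiting-reactant approximation whose accuracy I would expect to be argued a posteriori against the COMSOL finite-element results rather than proved. The secondary technical point is verifying the inverse Laplace transform itself, which amounts to matching $\tilde C_A(x,s)$ against the tabulated pair for $\exp(-a\sqrt{s+b})/s$.
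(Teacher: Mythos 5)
Your proposal is correct and takes essentially the same route as the paper: the paper likewise (implicitly) linearizes the reaction term to $kC_0C_A$ --- visible in its Laplace-domain equation through the factor $(s+kC_0)$ --- solves the resulting ODE with the decaying root and inverts to obtain $g$, and derives $C_{AB}$ by your exact conservation trick of summing the two PDEs so the reaction terms cancel, solving the pure convection-diffusion equation for $C_A+C_{AB}$, and subtracting $C_A$. The only cosmetic differences are that the paper keeps the rectangular pulse in the Laplace domain via the factor $(1-e^{-sT_\text{ON}})$ rather than superposing two time-shifted steps, and that your caveat about the inlet amplitude ($C_{A_0}$ versus $C_0=\min\{C_{A_0},C_{B_0}\}$) is well placed, since the paper's appendix in fact writes $h$ with amplitude $C_0$ while the theorem statement uses $C_{A_0}$.
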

\begin{proof}
	See the Appendix \ref{A}.
\end{proof}
\vspace{-10pt}	
\begin{theorem}
	\label{the2}
	With species $A$ following a Gaussian concentration distribution 
	\begin{align}
	C_{A}(0,t)=\frac{C_{A_0}^{\text{}}}{\sqrt{2 \pi \sigma^2}} e^{-\frac{(t-\mu)^2}{2\sigma^2}}
	\end{align}
	being injected at the inlet of a straight microfluidic channel at $x=0$ and $t=0$ using velocity $v_\text{eff}$ and $C_{B_0}>\max\left\{ C_A(0,t) \right\}$, the concentration distribution of $A$ can be approximated as 
	\begin{equation}\label{approx1}
	{C_A}(x,t) \approx C_{A}^{\text{Appro$_1$}}(x,t)=\begin{cases}C_A(0,t-\frac{x}{v_\text{eff}})-C_{B_0}, &t_1+\frac{x}{v_\text{eff}}\leq t \leq t_2+\frac{x}{v_\text{eff}},\\
	0, &\text{otherwise}.	
	\end{cases}
	\end{equation}	
	\begin{equation} \label{approx2}
	\begin{aligned}
		\text{or}~{C_A}(x,t) \approx C_{A}^{\text{Appro$_2$}}(x,t)=\frac{1}{2 \pi}\int_0^\infty [e^{-j\omega t}\overline{\widetilde{{C_A^{\text{Appro$_2$}}}}(x,\omega)}+e^{j\omega t}{\widetilde{{C_A^{\text{Appro$_2$}}}}(x,\omega)}]\mathrm{d}w,
	\end{aligned}
	\end{equation}	
	where
	\begin{align}
		\widetilde{{C_A^{\text{Appro$_2$}}}}(x,s)=l(s)e^{\frac{v_\text{eff}-\sqrt{{v_\text{eff}}^2+4D_{\text{eff}}s}}{2D_{\text{eff}}}x},
	\end{align}
	\begin{align}
		l(s)={C_{A_0}^{\text{}}}e^{-s \mu+\frac{{(\sigma s)}^2}{2}}[Q({\frac{t_1+\sigma^2 s-\mu}{\sigma}})-Q({\frac{t_2+\sigma^2 s-\mu}{\sigma}})]-\frac{C_{B_0}^{\text{}}}{s}(e^{-st_1}-e^{-st_2}),
	\end{align}
	\begin{align}
		t_1 =\mu -\sqrt{-2\sigma^2 \ln \frac{C_{B_0}^{\text{}}\sqrt{2\pi \sigma^2}}{C_{A_0}^{\text{}}}}, \\
		and ~t_2 =\mu +\sqrt{-2\sigma^2 \ln \frac{C_{B_0}^{\text{}}\sqrt{2\pi \sigma^2}}{C_{A_0}^{\text{}}}}.
	\end{align}
	
\end{theorem}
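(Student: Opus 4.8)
The plan is to use the excess of species $B$ to collapse the coupled nonlinear system \eqref{reaction1}--\eqref{reaction1Y} into a pure transport problem driven by a \emph{thresholded} inlet signal, and then to solve that transport problem at two levels of fidelity matching the two claimed approximations. The governing idea is a separation into a thin reaction zone near $x=0$ followed by reaction-free propagation. Since $B$ is continuously supplied at the fixed level $C_{B_0}$ and the forward reaction is effectively instantaneous (consistent with the unbalanced-reaction assumption $k_f \gg k_r$ adopted earlier), $A$ and $B$ annihilate in a $1{:}1$ ratio in that zone: wherever the arriving signal obeys $C_A(0,t) > C_{B_0}$ the local $B$ pool is exhausted and $A$ is lowered by exactly $C_{B_0}$, whereas wherever $C_A(0,t) \le C_{B_0}$ all of $A$ is consumed. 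The net map is the thresholding operation $C_A(0,t) \mapsto \max\{C_A(0,t)-C_{B_0},\,0\}$, and downstream of the reaction zone $B$ is locally depleted, so the surviving $A$ satisfies the reaction-free convection-diffusion equation \eqref{b2}.

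Next I would locate the support of the residual signal by solving $C_A(0,t)=C_{B_0}$ for the Gaussian profile. This gives two crossing times symmetric about $\mu$, and carrying out the algebra (taking logarithms, keeping the factor $\sqrt{2\pi\sigma^2}$) reproduces exactly the stated $t_1$ and $t_2$; note that real crossing times require the Gaussian peak $C_{A_0}/\sqrt{2\pi\sigma^2}$ to exceed $C_{B_0}$, which is the operative thresholding regime in which any residual survives. On $[t_1,t_2]$ the effective inlet datum is $C_A(0,t)-C_{B_0}$ and it vanishes elsewhere. For the first approximation \eqref{approx1} I would simply drop the $D_\text{eff}$ term; the residual profile is then purely advected at speed $v_\text{eff}$, so $C_A(x,t) = C_A^{\text{res}}(0,\,t-x/v_\text{eff})$, i.e. the thresholded Gaussian with its support window translated by the travel time $x/v_\text{eff}$, which is precisely \eqref{approx1}.

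For the sharper approximation \eqref{approx2} I would retain diffusion and solve \eqref{b2} on the half-line with inlet datum $C_A^{\text{res}}(0,t)$ and zero initial condition via the Laplace transform in $t$. The transform turns \eqref{b2} into the ODE $D_\text{eff}\,\widetilde{C}_A'' - v_\text{eff}\,\widetilde{C}_A' - s\,\widetilde{C}_A = 0$, whose root that remains bounded as $x\to\infty$ is $r(s)=(v_\text{eff}-\sqrt{v_\text{eff}^2+4D_\text{eff}s})/(2D_\text{eff})$, so $\widetilde{C}_A(x,s)=l(s)\,e^{r(s)x}$ with $l(s)=\int_{t_1}^{t_2}[C_A(0,t)-C_{B_0}]e^{-st}\,\mathrm{d}t$. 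Here the constant part integrates elementarily to $-\tfrac{C_{B_0}}{s}(e^{-st_1}-e^{-st_2})$, while for the Gaussian part I would complete the square in the exponent, which extracts the prefactor $C_{A_0}e^{-s\mu+(\sigma s)^2/2}$ and leaves a truncated Gaussian integral that is expressed through the $Q$-function, giving the two $Q$-terms; this matches the stated $l(s)$. Finally, inverting along the imaginary axis ($s=j\omega$) and folding the Bromwich contour onto $[0,\infty)$ via the conjugate symmetry $\widetilde{C}_A(x,-j\omega)=\overline{\widetilde{C}_A(x,j\omega)}$ (valid since $C_A$ is real) yields the real integral representation \eqref{approx2}.

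The main obstacle is not the Laplace inversion, which is mechanical, but the justification of the first step: reducing the genuinely coupled nonlinear reaction dynamics to a one-shot thresholding at the inlet followed by reaction-free transport. This hinges on a timescale separation (reaction fast relative to advection and diffusion) and on the continuously injected $B$ staying pinned near $C_{B_0}$ outside a thin reaction layer. Quantifying the error of this quasi-static thresholding picture—and in particular checking that the diffusive smearing of the threshold crossing neglected in \eqref{approx1}, and only partially recovered in \eqref{approx2}, is indeed small—is the delicate point, which is ultimately why both approximations are benchmarked against the COMSOL finite-element solution.
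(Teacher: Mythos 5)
Your proposal follows essentially the same route as the paper's own proof: the same decoupling of the coupled reaction--transport system into an instantaneous inlet thresholding $C_A(0,t)\mapsto \max\{C_A(0,t)-C_{B_0},\,0\}$ supported on $[t_1,t_2]$, the same convection-only shift argument for \eqref{approx1}, and the same Laplace-transform solution of the half-line convection-diffusion problem with $l(s)$ computed by completing the square, for \eqref{approx2}. The only cosmetic difference is the final inversion step, where you fold the Fourier/Bromwich integral onto $[0,\infty)$ via conjugate symmetry while the paper invokes the Gil-Pelaez theorem to write the identical real integral; you also correctly note, as the paper's formulas implicitly require despite the theorem's stated hypothesis, that real crossing times $t_1,t_2$ exist only when the Gaussian peak $C_{A_0}/\sqrt{2\pi\sigma^2}$ exceeds $C_{B_0}$.
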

\begin{proof}
	See the Appendix \ref{B}.
\end{proof}
Our result $C_{A}^{\text{Appro$_2$}}(x,t)$ can be easily computed using Matlab. Importantly, \eqref{b4}, \eqref{approx1}, and \eqref{approx2} reduce to solutions of a convection-diffusion equation when $C_{B_0}=0$.

In Fig. \ref{reaction_rectangular} and \ref{reaction_gaussian}, we plot the analytical outlet concentrations of species $AB$ in \textbf{Theorem 1}, species $A$ in \textbf{Theorem 2} and their simulation results using COMSOL, where we use “Ana.” and “Sim.” to abbreviate “Analytical” and “Simulation”, respectively, and this notation is also used throughout the rest of this paper.
\begin{figure}[tb]
	\centering
	\begin{minipage}[t]{0.48\textwidth}
		\centering
		\includegraphics[width=3in]{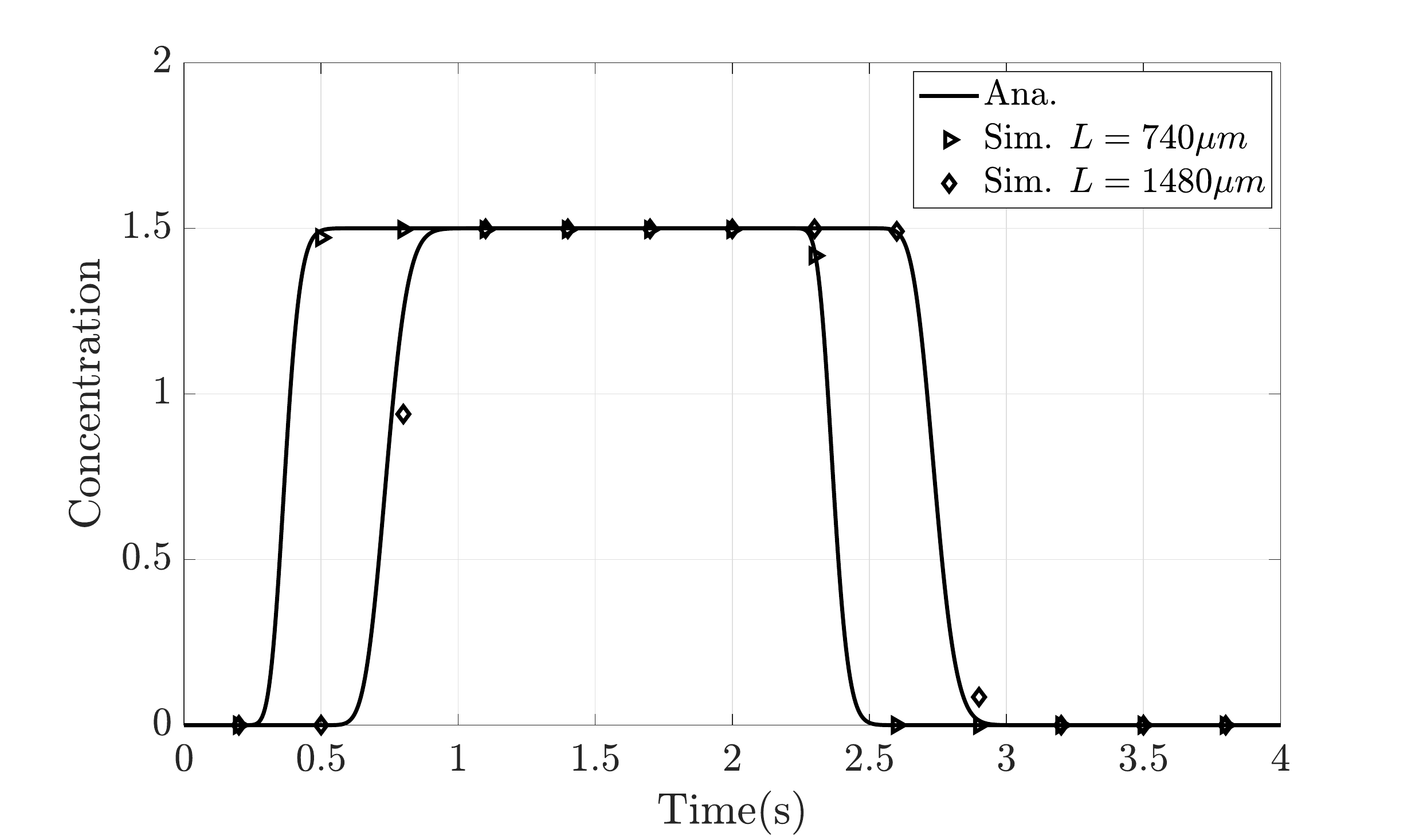}
		\caption{The concentration of species $AB$ in \textbf{Theorem 1} with different channel length $L$.}
		\label{reaction_rectangular}
	\end{minipage}
	\quad
	\begin{minipage}[t]{0.48\textwidth}
		\centering
		\includegraphics[width=3in]{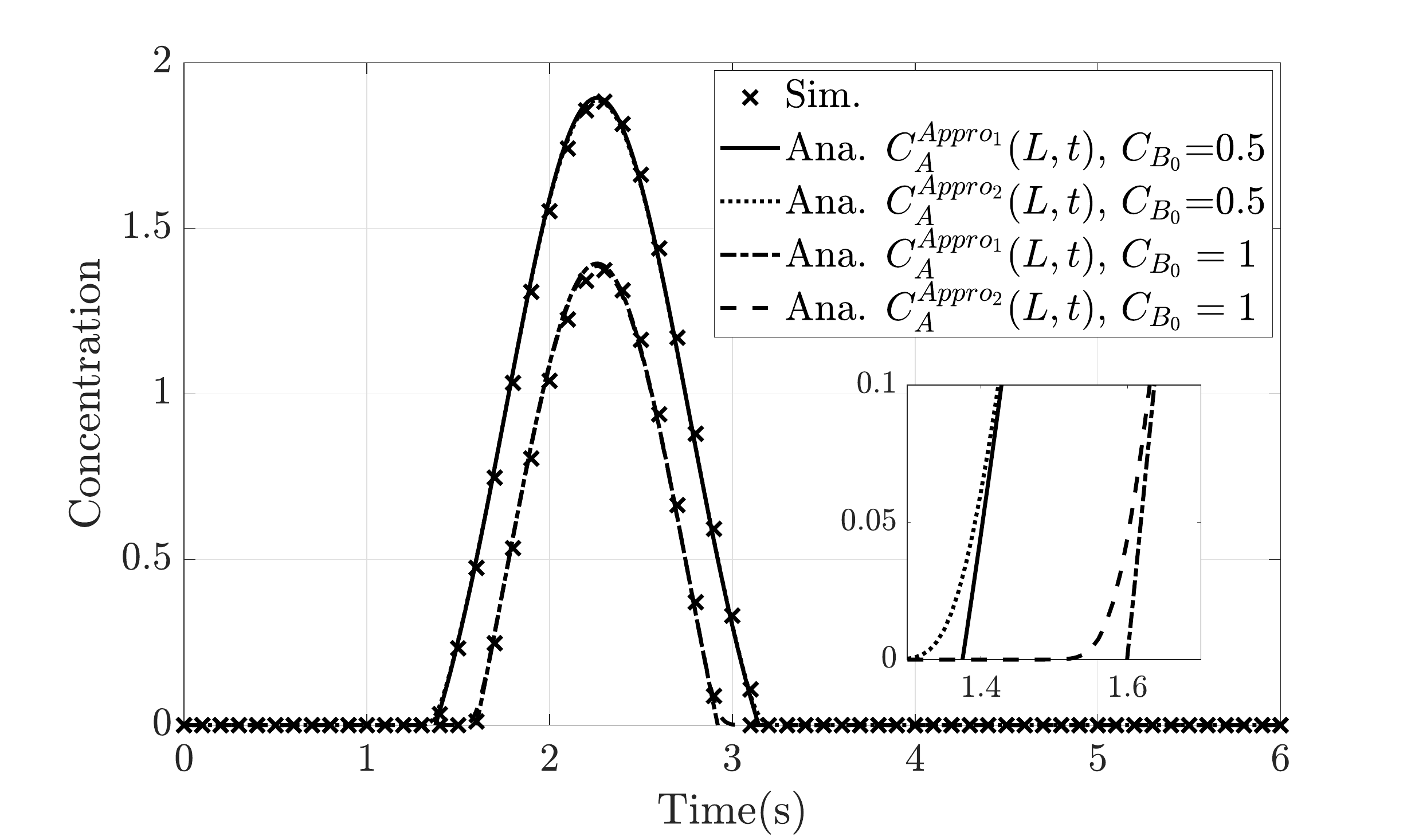}
		\caption{The concentration of species $A$ in \textbf{Theorem 2} with $L=540\mu$m and different $C_{B_0}$.}
		\label{reaction_gaussian}
	\end{minipage}	
\end{figure}
We set the parameters: $C_{A_0}=C_{{B_0}}=1.5$mol/m$^3$, $C_{A_0}=3$mol/m$^3$, $\mu=2$, $\sigma^2=0.25$, $D_\text{eff}=10^{-8}$m$^2$/s, $k=400$m$^3$/(mol$\cdot$s), $T_\text{ON}=2$s. The simulation points are plotted using the outlet of a straight microfluidic channel with rectangular-shaped cross section, $h= 10\mu$m and $w= 20\mu$m, where the species $A$ and $B$ are both injected with the same velocity $v_\text{eff}=0.2$cm/s. In Fig. \ref{reaction_rectangular}, it clearly demonstrates a close match between the analytical curves and the simulation points with different channel length $L$. In Fig. \ref{reaction_gaussian}, we observe that both approximation methods capture the residual concentration variation of $A$ after reaction $A+B \to AB$. When $C_A$ approaches to zero, the curve using the second approximation method is smoother than that using the first approximation method due to the consideration of diffusion effect.

\vspace{-12pt}	
\section{Microfluidic MC Transmitter Analysis and Design Optimization}
\label{sec:TX}
In this section, we first analyse the Y Junction and three reaction channels, and then we provide the microfluidic transmitter design in terms of the optimal design of the reaction II channel length and the restricted time gap between two consecutive input bits, which enable us to control the maximum concentration of a generated pulse and ensure a continuous transmission of non-distorted pulses, respectively.
\vspace{-12pt}	
\subsection{Microfluidic MC Transmitter Analysis}

\subsubsection{\textbf{Y Junction}}
The fluid flow containing input reactant $X$ with concentration 
\begin{align}
	C_{X}^{\text{II}}(x,t)=C_{X_0}^{\text{II}}[u(t)-u(t-T_\text{ON})]\\
	\text{and}~C_{X}^{\text{III}}(x,t)=C_{X_0}^{\text{III}}[u(t)-u(t-T_\text{ON})]
\end{align}
is injected into Inlet II and Inlet III using syringe pumps, whereas the reactant $S_y$ with concentration $C_{S_{y_0}}^{\text{I}}$ and reactant $S_p$ with concentration $C_{S_{p_0}}^{\text{VI}}$ are continuously injected into Inlet I and Inlet IV, respectively. We let the inlets of a Y Junction as the location origin ($x=0$) and let the time that species are injected at Y Junction inlets as the time origin ($t=0$). For Y Junction I, the outlet concentration of species $X$ can be expressed using \eqref{b4} in \textbf{Theorem 1} with $C_{B_0}=0$ and a substitution of $C_{X_0}^{\text{II}}$ for $C_{A_0}$. However, the complicated form of \eqref{b4} will make Reaction I channel intractable since the outlet concentration of species $X$ at Y Junction I is an initial boundary condition for the convection-diffusion-reaction equation describing Reaction I channel. Take into account that the Y Junction length is shorter than the Reaction I channel length, for simplicity, we assume the outlet concentration of species $X$ is only a time shift of its injected concentration due to the travelling of Y Junction I, that is
\begin{align}\label{yjunction_assumption} 
C_{X}(L_Y,t)\approx C_{X_0}^{\text{II}}[u(t-t_\text{Y})-u(t-T_\text{ON}-t_\text{Y})],
\end{align}
where $t_\text{Y}=\frac{\sqrt{2}L_Y}{v_\text{eff}}$ is the travelling time of a Y Junction ($L_Y$ is marked in Fig. \ref{bf2}). Apparently, the above analysis can also be applied to Y junction II.

\subsubsection{\textbf{Straight Reaction I Channel}}
The outflow of Y junction I enters Reaction I channel to activate $\text{Reaction I}$ in \eqref{CR1}. 
The simultaneous flush of independent $X$ and $S_y$ leads to a concentration dilution, which can be treated as diluting species $X$ using $S_y$ or diluting species $S_y$ using $X$. Hence, with the assumption of \eqref{yjunction_assumption}, the concentration of species $X$ and $S_y$ at the inlet of Reaction I channel become $\frac{1}{2}C_{X}(L_Y,t)$ and $\frac{1}{2}C_{S_{y_0}}^{\text{I}}$, respectively. Based on this, the outlet concentration of species $Y$ can be expressed using \eqref{b5} in \textbf{Theorem 1} by substituting $C_{A_0}$ and $C_{B_0}$ with $C_{X_0}^{\text{II}}$ and $C_{S_{y_0}}^{\text{I}}$, that is   
\begin{align} \label{straight_reaction1}
	C_Y(L_Y+L_1,t)\approx\frac{1}{2}C_{AB}(L_1,t-t_\text{Y}).
\end{align}

Fig. \ref{reaction1_channel} plots the concentration of species $Y$ at Reaction I channel outlet with Y Junction I.
\begin{figure}
		\centering
		\includegraphics[width=3in]{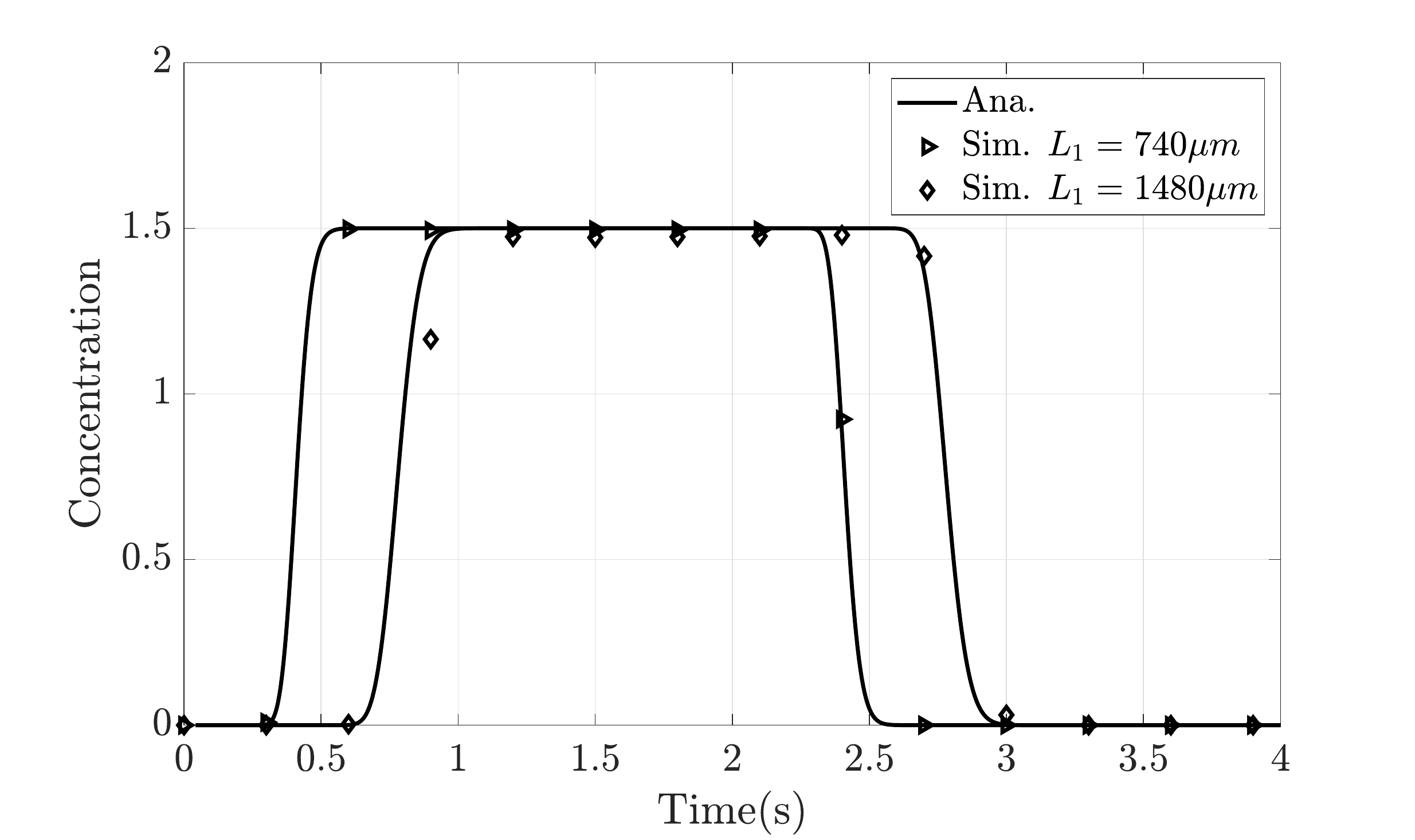}
		\caption{The concentration of species $Y$ at Reaction I channel outlet with Y Junction I.}
		\label{reaction1_channel}
\end{figure}
We set the parameters: $C_{X_0}=C_{{Y_0}}=3$mol/m$^3$, $D_\text{eff}=10^{-8}$m$^2$/s, $k=400$m$^3$/(mol$\cdot$s), $T_\text{ON}=2$s, $v_\text{eff}=0.2$cm/s, $L_Y=60\mu$m, $h= 10\mu$m and $w= 10\mu$m. It is evident that simulation points are in agreement with theoretical analysis in \eqref{straight_reaction1} under different $L_1$, which validates the analysis of straight Reaction I channel.

\subsubsection{\textbf{Serpentine Reaction II Channel}}
The analysis of straight Reaction I channel can also be applied to serpentine Reaction II channel, which yields
\begin{align} \label{serpentine_reaction2}
C_P(L_Y+L_1,t)\approx\frac{1}{2}C_{AB}(L_2,t-t_\text{Y}).
\end{align}  
This can be explained by the following reasons: 1) although turning corners in the serpentine channel usually cause different laminar flows propagating different distances, we can approximate outlet concentrations of the serpentine channel as those of a straight channel with equivalent length when fluids are in low Reynolds number with very small side length tube, and 2) the form of the convection-diffusion-reaction equation and its initial boundary conditions stills hold with only a substitution of $C_P(x,t)$, $C_{S_{p_0}}^{\text{IV}}$, and $C_{X_0}^{\text{III}}$ for $C_Y(x,t)$, $C_{S_{y_0}}^{\text{I}}$, and $C_{X_0}^{\text{II}}$, respectively.

\subsubsection{\textbf{Straight Reaction III Channel}}
The generated species $Y$ and $P$ mix with each other at a conjunction with length $L_C$ and leads to a concentration dilution before flowing to the Reaction III channel. Therefore, at the inlet of straight Reaction III channel, the concentrations of species $Y$ and $P$ are
\begin{align}
	C_Y(L_Y+L_1+L_C,t)\approx\frac{1}{4}C_{AB}(L_1,t-t_\text{Y}-t_\text{C}), \label{Y_r3}\\
	\text{and}~C_P(L_Y+L_1+L_C,t)\approx\frac{1}{4}C_{AB}(L_2,t-t_\text{Y}-t_\text{C}), \label{P_r3}
\end{align}
where  $t_\text{C}=\frac{L_C}{v_\text{eff}}$ is the travelling time of the conjunction. When both species $Y$ and $P$ appear in Reaction III channel, Reaction III in \eqref{CR3} is activated, and the corresponding convection-diffusion-reaction equations can be constructed as \eqref{reaction1} and \eqref{reaction1Y}. Unfortunately, it is foreseeable that deriving the spatial-temporal concentration distribution of species $Y$, exactly the distribution of the generated pulse, is intractable, since the initial condition with the form of $C_{AB}$ in \eqref{b5} is too complicated. However, it is possible to obtain the maximum concentration of the generated pulse, which will be presented in the next subsection.

\vspace{-12pt}	
\subsection{Microfluidic MC Transmitter Design}
\subsubsection{\textbf{Optimal Design of the Reaction II Channel Length}}
\label{optimization1}

As stated earlier, the maximum concentration of a generated pulse, denoted as $\max \left\{C_{\text{TX}} \right\}$, can be obtained, although the convection-diffusion-reaction equation describing Reaction III channel cannot be theoretically solved. In fact, there are many factors affecting $\max \left\{C_{\text{TX}} \right\}$, such as the rate constant $k$ and reaction channel lengths $L_1$, $L_2$, and $L_3$. However, if we assume that the rate constant $k$ and reaction channel lengths collectively ensure that reactants are fully converted into a product in each reaction, the Reaction II channel length $L_2$ will be the only parameter affecting $\max \left\{C_{\text{TX}} \right\}$.
\begin{figure}
	\centering
	\includegraphics[width=2.7in]{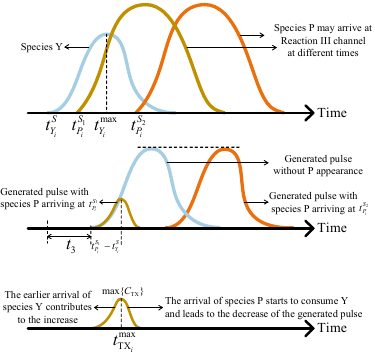}
	\caption{The generated pulses with different arriving time of species $P$ at Reaction III channel. $t_3$ is the travelling time over Reaction III channel.}
	\label{design_flow_r}
\end{figure}

At the transmitter, the design of channel length $L_2>L_1$ allows species $Y$ to first enter the Reaction III channel 
with a result of the concentration increase of a generated pulse, while the late arrival of species $P$ prevents this increase, and leads to a decrease of the generated pulse, as $Y$ will be immediately depleted by $P$ as soon as $P$ appears in Reaction III channel (shown in Fig. \ref{design_flow_r}). Let us denote the arriving and leaving time of a general species $A$ at Reaction III channel inlet as $t_{A_i}^{S}$ and $t_{A_i}^{E}$ for the $i$th input bit, and the time that species $A$ reaches its maximum concentration at Reaction III channel inlet as $t_{A_i}^{\text{max}}$.
There are two situations that lead to different $\max \left\{C_{\text{TX}} \right\}$. 
\begin{itemize}
	\item If $t_{P_i}^{S}<t_{Y_i}^{\text{max}}$, the generated pulse will be consumed by $P$ before reaching $\max  \{C_Y(L_Y+L_1+L_C,t) \}$, causing $\max \left\{C_{\text{TX}} \right\} < \max \left\{C_Y(L_Y+L_1+L_C,t) \right\}$. 
	\item If $t_{P_i}^{S}>t_{Y_i}^{\text{max}}$, the generated pulse will reach $\max \left\{C_Y(L_Y+L_1+L_C,t) \right\}$, where the reaction between $Y$ and $P$ only influences the tail shape of the generated pulse.
\end{itemize}	
Therefore, we conclude $\max \left\{C_{\text{TX}} \right\}=\zeta C_Y(L_Y+L_1+L_C,t)$ with $\zeta \in [0,1]$. Meanwhile, the arriving time of species $P$ is determined by the length of Reaction II channel $L_2$. As such, we can flexibly control $\max \left\{C_{\text{TX}} \right\}$ by choosing different $L_2$. Based on this, we propose a step-by-step $L_2$ optimization flow as follows:
\begin{itemize}
		\item [] Initialization: Give $L_1$, $\zeta$, and initial concentrations $C_{S_{y_0}}^{\text{I}}$, $C_{X_0}^{\text{II}}$, $C_{X_0}^{\text{III}}$, and $C_{S_{p_0}}^{\text{IV}}$. 
\end{itemize}
\begin{enumerate}[Step 1:]
	\item Search for the time $t_{Y_i}^{\text{max}}$ to satisfy
	\begin{align}\label{b6}
	0\le \frac{\mathrm{d} {C_Y}(L_Y+L_1+L_C,t)}{\mathrm{d} t} \le \delta, ~t\le t_{Y_i}^{\text{max}},\\
	-\delta \le \frac{\mathrm{d} {C_Y}(L_Y+L_1+L_C,t)}{\mathrm{d} t} \le 0,~ t> t_{Y_i}^{\text{max}},
	\end{align}
	where ${C_Y}(L_Y+L_1+L_C,t)$ is given in \eqref{Y_r3}. 
	\item Calculate the maximum concentration of a generated pulse that $\max \left\{C_{\text{TX}} \right\}=\zeta {C_Y}(L_Y+L_1+L_C,t_{Y_i}^{\text{max}})$.
	\item Calculate the time $t^\text{max}_{\text{TX}_i}$ to satisfy ${C_Y}(L_Y+L_1+L_C,t^\text{max}_{\text{TX}_i})=\max \left\{C_{\text{TX}} \right\}$.  
	\item Calculate the Reaction II channel length $L_2$ via searching for
	\begin{align}
	{C_P}(L_Y+L_1+L_C,t^\text{max}_{\text{TX}_i}) \ge \epsilon, ~x\le L_2, \label{b91}\\
	{C_P}(L_Y+L_1+L_C,t^\text{max}_{\text{TX}_i}) < \epsilon, ~x>L_2, \label{b92}
	\end{align}
	where ${C_P}(L_Y+L_1+L_C,t)$ is given in \eqref{P_r3}.
\end{enumerate}

Here, we introduce two small variables, $\delta$ and $\epsilon$, to numerically find $t_{Y_i}^{\text{max}}$ and $L_2$, as it is difficult to analytically solve $\frac{\mathrm{d} {C_Y}(L_Y+L_1+L_C,t)}{\mathrm{d} t}=0$ and ${C_P}(L_Y+L_1+L_C,t^\text{max}_{\text{TX}_i})=0$.

To examine the proposed $L_2$ optimization flow, we implement three designs with different numbers of delay lines in COMSOL to achieve different $\max \left\{C_{\text{TX}} \right\}$. The implementation is shown in Fig. \ref{fig:cont} and geometric parameters are listed in Table \ref{table1} and Table \ref{table2}.
\begin{figure*}[!tb]
	\centering
	\subfloat[0 delay line, $\max \left\{C_{\text{TX}} \right\} \protect\\=\frac{1}{3} C_Y(L_Y+L_1+L_C,t)$.\label{d0}]{\includegraphics[width=1.9in]{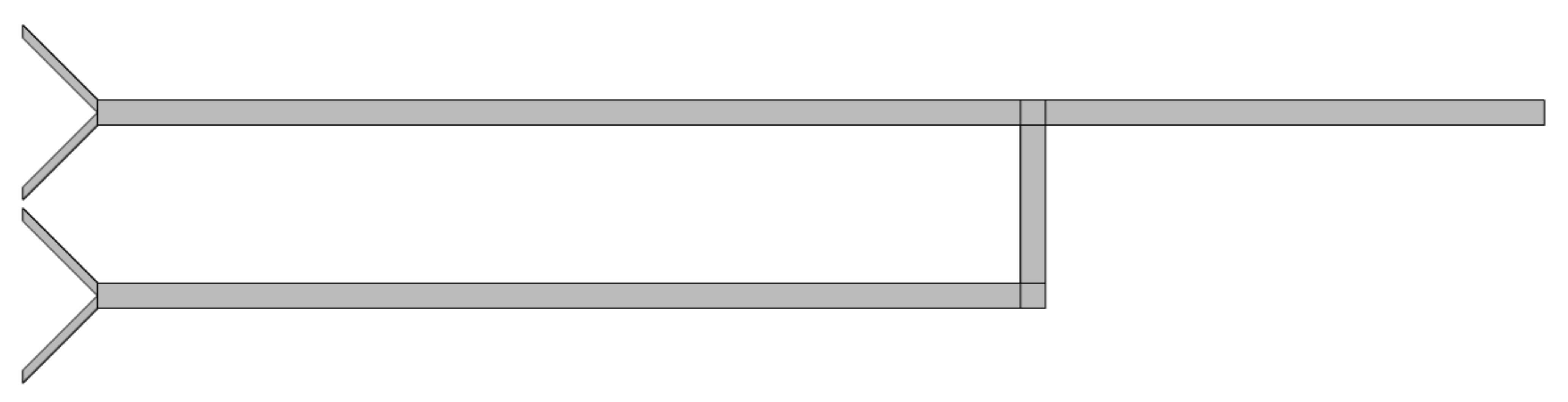}}%
	\qquad
	\subfloat[1 delay line, $\max \left\{C_{\text{TX}} \right\}\protect\\=\frac{2}{3} C_Y(L_Y+L_1+L_C,t)$.\label{d1}]{\includegraphics[width=1.9in]{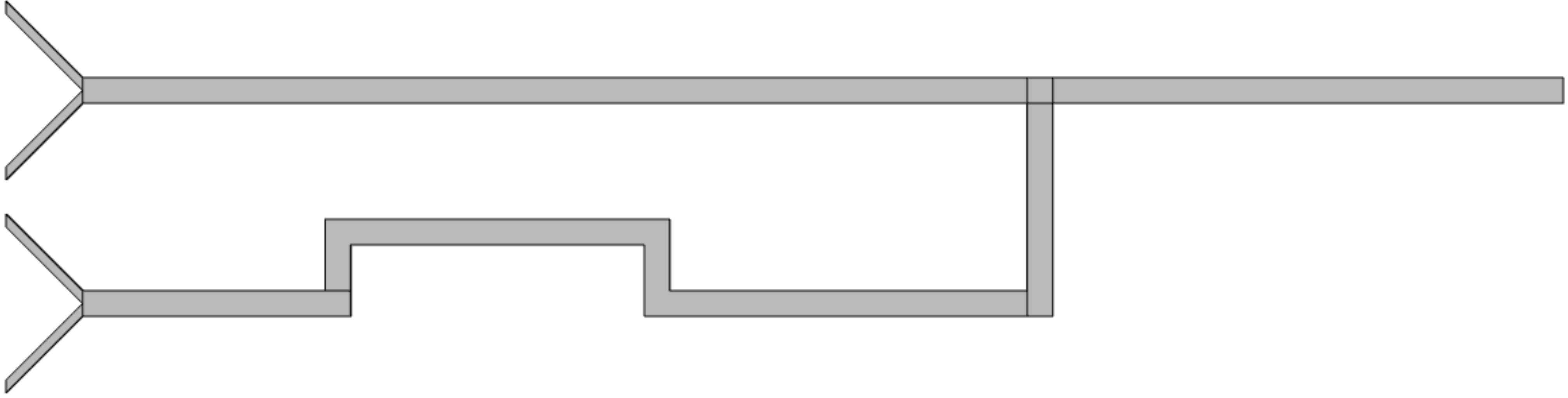}} 
	\qquad
	\subfloat[2 delay lines, $\max \left\{C_{\text{TX}} \right\}\protect\\= C_Y(L_Y+L_1+L_C,t)$.\label{d2}]{\includegraphics[width=1.9in]{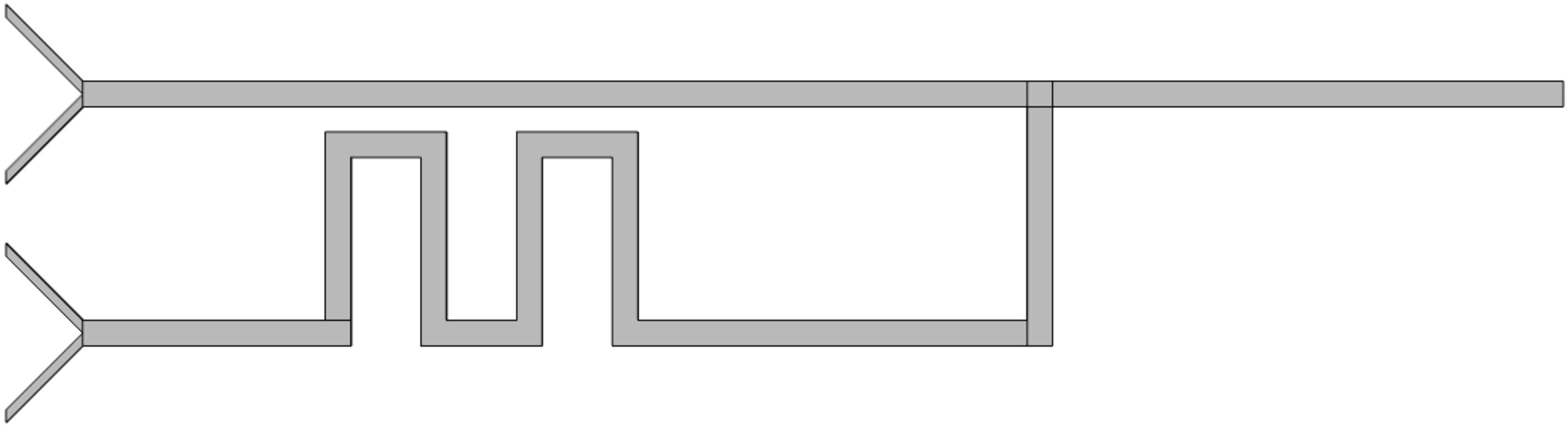}}
	\caption{Optimized transmitter implementations with different numbers of delay lines in COMSOL.}
	\label{fig:cont}%
\end{figure*}
\begin{table}[tb]
	\centering
	\caption{ The parameters of the proposed microfluidic transmitter.}
	{\renewcommand{\arraystretch}{0.8}
		\scalebox{0.8}{\begin{tabular}{c |c c c c}
				\hline
				Channel & Length($\mu$m) & Width($\mu$m)  & Depth($\mu$m)\\ \hline
				Y Junction  & $L_Y=60$ & $10$ & $10$  \\ 
				Conjunction & $L_C=20$ & $20$ & $10$  \\ 
				Reaction I Channel  & $L_1=740$ & $20$ & $10$  \\  
				Reaction III Channel  & $L_3=400$ & $20$ & $10$  \\ \hline
		\end{tabular}}
	}
	\label{table1}
\end{table}
\begin{table}[!tb]
	\centering
	\caption{The parameters of serpentine Reaction II channel in Fig. \ref{fig:cont}.}
	{\renewcommand{\arraystretch}{0.1}
		\scalebox{0.8}{\begin{tabular}{c |c c c c c c c c c c}
				\hline
				Channel & $L_2$($\mu$m) & $L_{21}$($\mu$m)  & $L_{22}$($\mu$m) &$L_{23}$($\mu$m)  &$L_{s}$($\mu$m) &$H_{s}$($\mu$m)  &$\zeta$  &$\delta$  &$\epsilon$\\ \hline
				\makecell{$0$ delay line}
				& $887$  & $/$   & $/$   &$137$ &/   &/       &$1/3$ &$0.13$&$10^{-1}$\\  \hline
				\makecell{$1$ delay line}  & $1019$ & $200$ & $300$ &$157$ &$250$ &$56$      &$2/3$ &$0.13$ &$3\times 10^{-2}$ \\ \hline
				\makecell{$2$ delay lines} & $1516$ & $200$ & $325$ &$177$ &$75$  &$147.25$  &$1$
				&$0.13$&$10^{-3}$\\  \hline
		\end{tabular}}
	}
	\label{table2} 
\end{table}
Other parameters are set following: $C_{S_{y_0}}^{\text{I}}=C_{X_0}^{\text{II}}=3$mol/m$^3$, $C_{X_0}^{\text{III}}=C_{S_{p_0}}^{\text{IV}}=4$mol/m$^3$, $D_\text{eff}=10^{-8}$m$^2$/s, $k=400$m$^3$/(mol$\cdot$s), $T_\text{ON}=2$s, $v_\text{eff}=0.2$cm/s. 
Here, we modify $\max \{C_Y(L_Y+L_1+L_C,t) \}$ from $0.75$ to $0.7498$. As shown in Fig. \ref{reaction1_channel}, when $L_1=740\mu$m, 
$C_Y(L_Y+L_1,t)$ rapidly reaches $1.4995$ at $0.55$s and then increases very slowly to the maximum concentration $1.5$ at $0.9511$s. It takes $0.4$s to reach the maximum concentration from $1.4995$, while the concentration increase is less than $0.001$. In order to generate a pulse that both two sides of the maximum concentration show a distinct increase or decrease, we modify $\max \left\{C_Y(L_Y+L_1,t) \right\}$ and $t_{Y_i}^{\text{max}}$ as $1.4995$ and $0.55$s, respectively, thus $\max \left\{C_Y(L_Y+L_1+L_C,t) \right\}=\frac{1}{2} \max \left\{C_Y(L_Y+L_1,t) \right\}=0.7498$.

In Fig. \ref{f_flow}, we plot the concentrations of generated pulses for implementations in Fig. \ref{fig:cont}.
As expected, the output pulses are generated successfully during $T_\text{ON}$, and all the maximum concentrations of the pulses reach their corresponding analytical values (marked in black dash-dot lines). It is also seen that the longer the Reaction II channel is, the wider the generated pulse, because of the longer time given to reach a higher maximum concentration. These observations reveal the dependency of the maximum concentration of a generated pulse on the Reaction II channel length $L_2$, show how the predefined shaping of the pulse can be controlled, and highlight the importance of deriving theoretical signal responses in design stage.

\subsubsection{\textbf{Optimal Design of the Restricted Time Gap}}
\begin{figure}[t]
	\centering
	\begin{minipage}[t]{0.47\textwidth}
		\centering
		\includegraphics[width=3.4in]{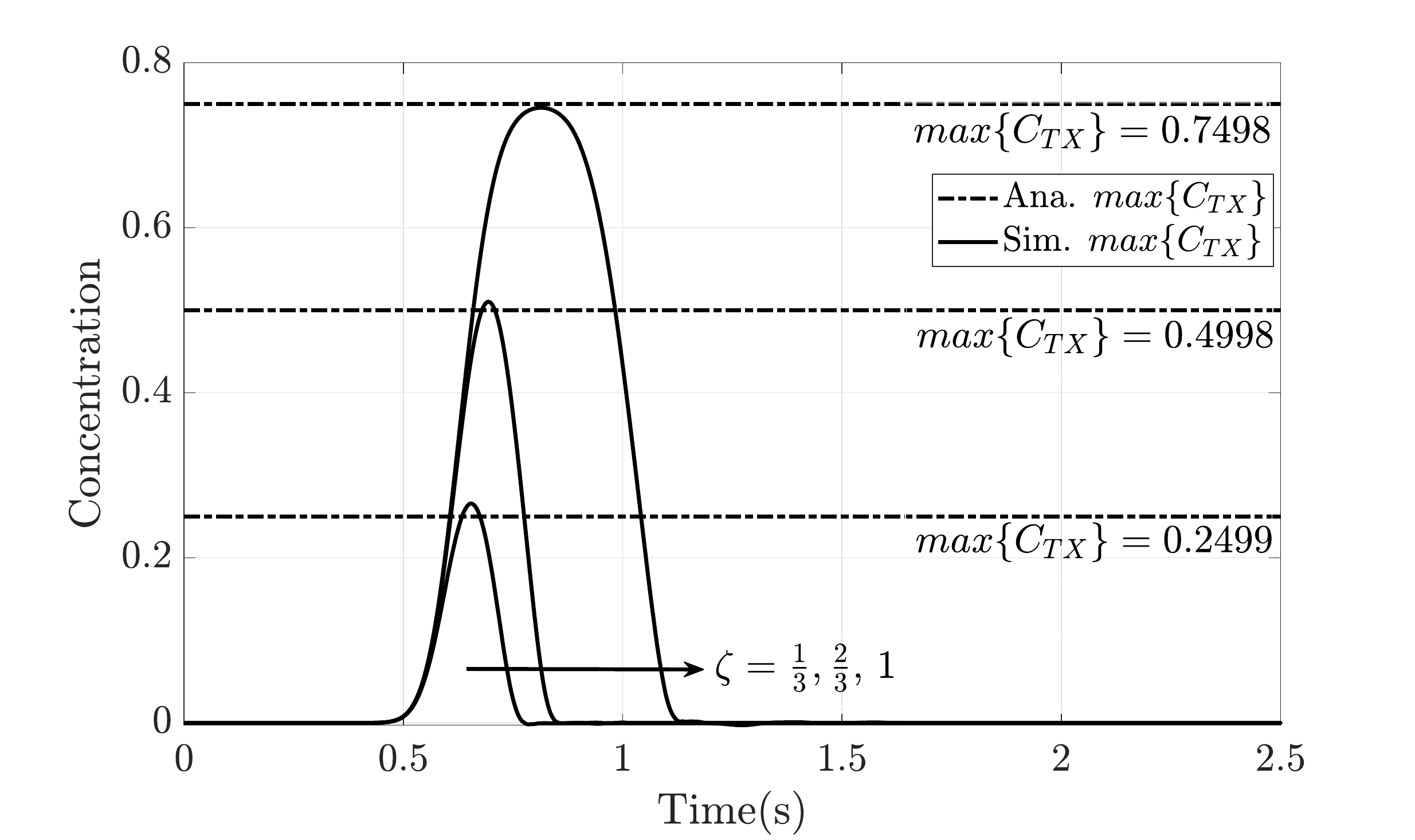}
		\caption{The concentrations of generated pulses for different transmitter implementations.}
		\label{f_flow}
	\end{minipage}
	\quad
	\begin{minipage}[t]{0.47\textwidth}
		\centering
		\includegraphics[width=2.6in]{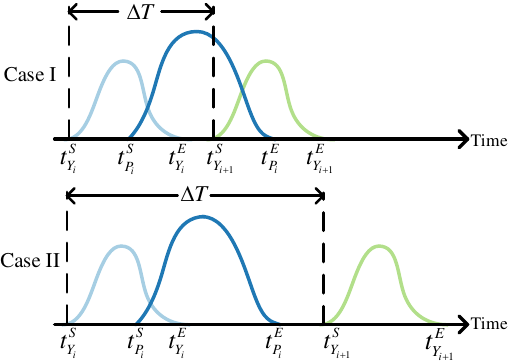}
		\caption{The concentrations of species $Y$ and $P$ at Reaction III channel inlet with different time gaps.}
		\label{deltaT}
	\end{minipage}
\end{figure}

The design that the Reaction II channel is longer than the Reaction I channel ($L_2>L_1$) is also likely to cause distorted pulses if the time gap $\Delta T$ between two consecutive input bits is not chosen appropriately. Assuming that species $Y$ generated by the $(i+1)$th input bit arrives earlier than the leaving time of species $P$ generated by the $i$th input bit at Reaction III channel inlet, $Y$ will be immediately consumed according to $\text{Reaction III}$ when they simultaneously enter the Reaction III channel so that the maximum concentration of the generated pulse for the $(i+1)$th input bit is distorted and less than $\max \left\{C_{\text{TX}} \right\}$. To prevent this, the time gap $\Delta T$ should be restricted. 

Remind that the arriving and leaving time of a general species $A$ at Reaction III channel inlet are denoted as $t_{A_i}^{S}$ and $t_{A_i}^{E}$ for the $i$th input bit.
As shown in Fig. \ref{deltaT}, species $Y$ generated by the $(i+1)$th input bit can appear earlier in Case I or later in Case II than species $P$ generated by the $i$th input bit via adjusting $\Delta T$. In Case I, the earlier arriving of $Y$ makes itself react with the tail of $P$, thus breaking the principle that $Y$ should increase to $\max \left\{C_{\text{TX}} \right\}$ and then drop to zero. To avoid this, $\Delta T$ needs to satisfy
\begin{align}
	\Delta T \ge t_{P_i}^{E}-t_{Y_i}^{S},
\end{align}
where $t_{Y_i}^{S}$ and $t_{P_i}^{E}$ can be numerically solved by
\begin{align} 
C_Y(L_Y+L_1+L_C,t) \leq \tau, ~t\leq t_{Y_i}^{S},~C_Y(L_Y+L_1+L_C,t) > \tau,~ t> t_{Y_i}^{S}; \label{searching_delta1}\\
C_P(L_Y+L_1+L_C,t) \ge \tau, ~t\leq t_{P_i}^{E}, ~C_P(L_Y+L_1+L_C,t) < \tau, ~t> t_{P_i}^{E} \label{searching_delta2}.
\end{align}
Here, $\tau$ is a small variable to find $t_{Y_i}^{S}$ and $t_{P_i}^{E}$ that $C_Y(L_Y+L_1+L_C,t_{Y_i}^{S})=0$ and $C_P(L_Y+L_1+L_C,t_{P_i}^{E})=0$, respectively.

In Fig. \ref{f_deltaT}, we plot the concentrations of species $Y$ and $P$ at Reaction III channel inlet and the generated pulses with different $\Delta T$.
\begin{figure*}%
	\centering
	\subfloat[The durations of two consecutive input bits are ${[0.1,2.1]}$ and ${[2.4,4.4]}$\label{f_deltaT1}]{\includegraphics[width=3in]{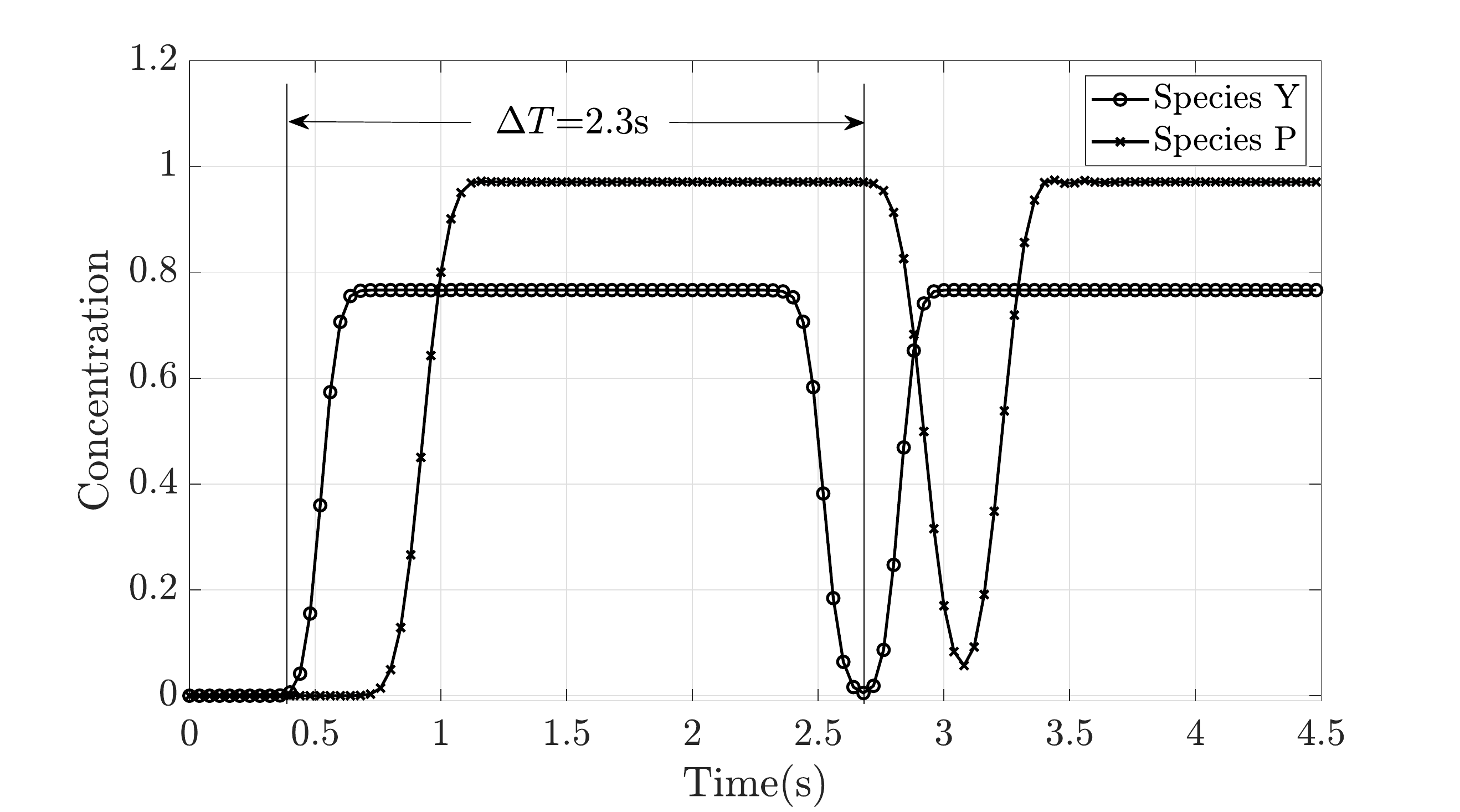}}%
	\qquad
	\subfloat[The durations of two consecutive input bits are ${[0.1,2.1]}$ and ${[3.1,5.1]}$.\label{f_deltaT5}]{\includegraphics[width=3in]{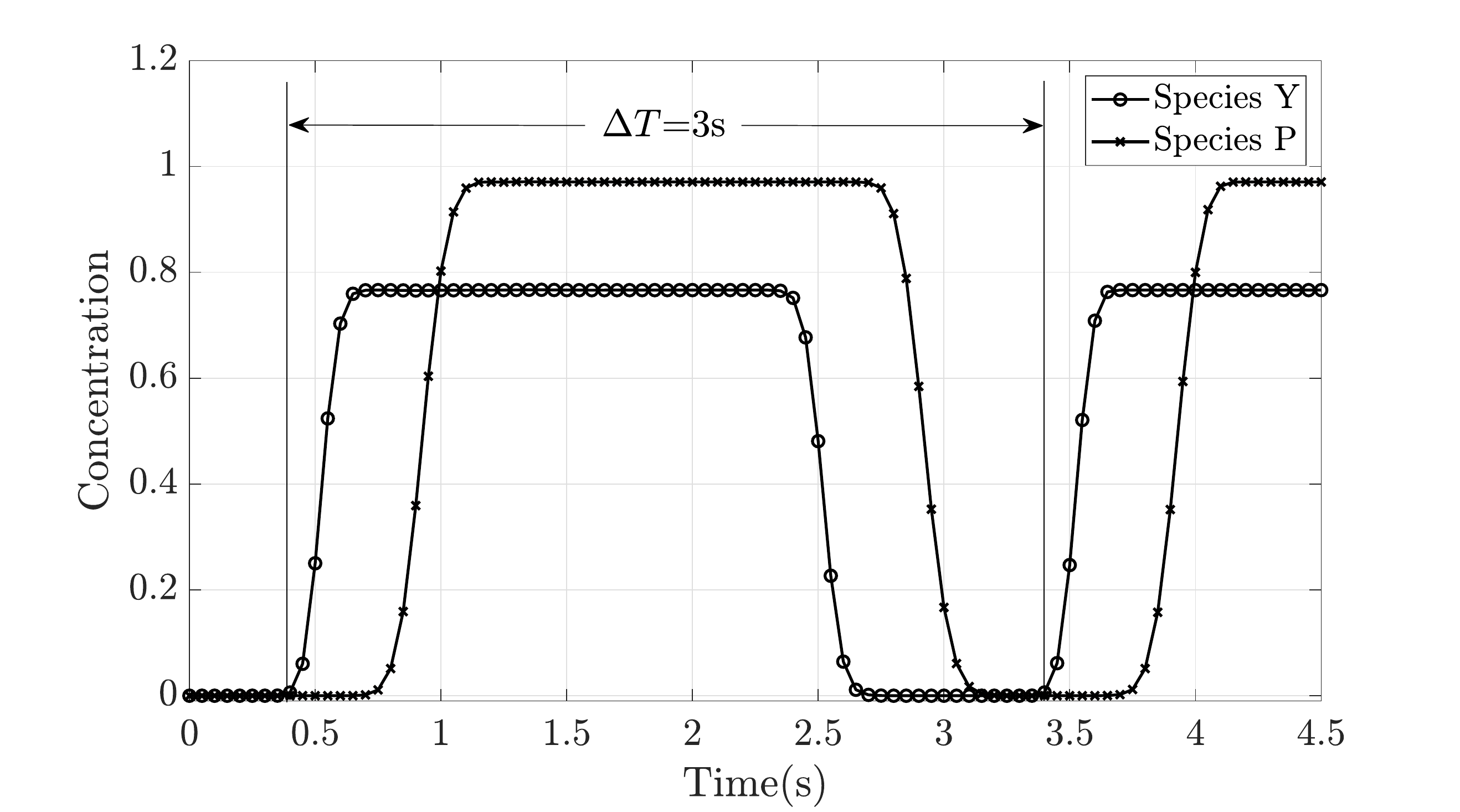}}\\
	\subfloat[The generated pulses of (a).\label{f_deltaT2}]{\includegraphics[width=3in]{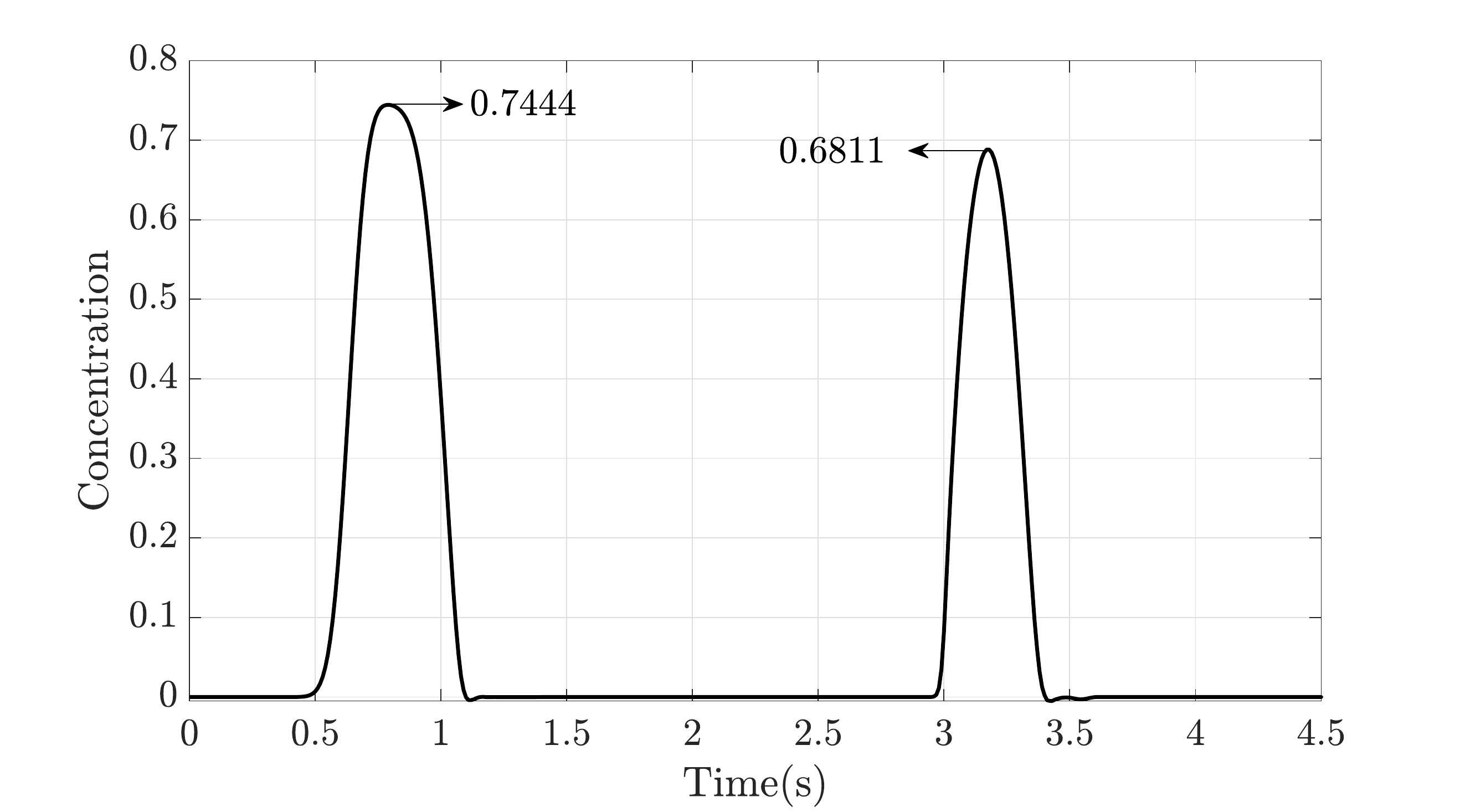}} 
	\qquad
	\subfloat[The generated pulses of (b).\label{f_deltaT6}]{\includegraphics[width=3in]{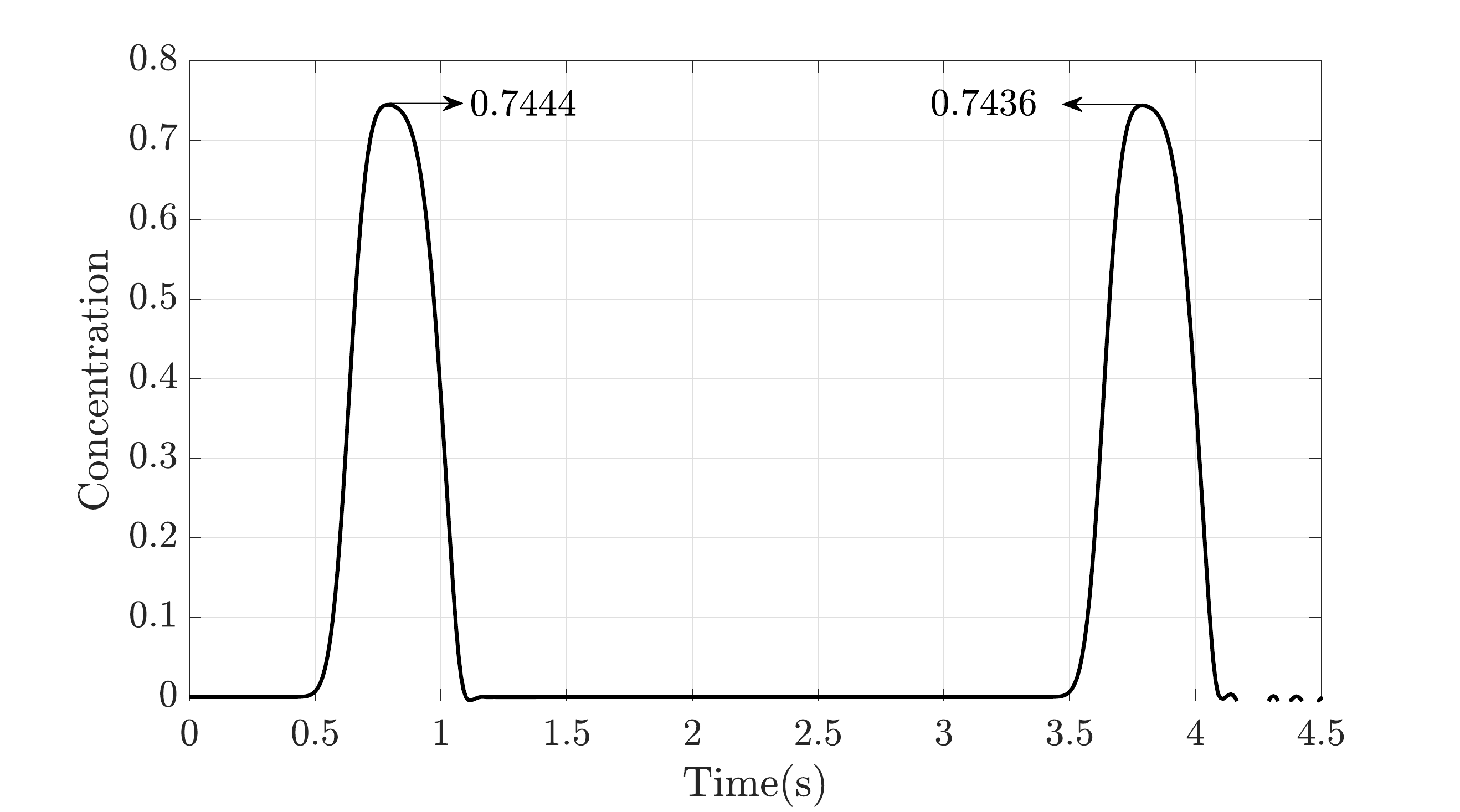}}
	\caption[The study of the $\Delta T$.]{The concentrations of species $Y$ and $P$ at Reaction III channel inlet and their generated pulses with different $\Delta T$.}
	\label{f_deltaT}%
\end{figure*}
We use the parameters for Fig. \ref{fig:cont} \subref{d2} and $\tau=10^{-3}$. We numerically solve \eqref{searching_delta1}, \eqref{searching_delta2} and obtain $\Delta T \ge 2.75$s. Fig. \ref{f_deltaT} \subref{f_deltaT2} shows that the second pulse is distorted compared with the first pulse because $\Delta T=2.3$s leading to 
the earlier arriving of species $Y$ generated by the $2$nd input bit, and thus a twice consumption of $Y$, being first consumed by the tail of $P$ generated by the $1$st input bit and then by the arriving of $P$ generated by the $2$nd input bit. On the contrary, Fig. \ref{f_deltaT} \subref{f_deltaT6} illustrates a generation of two non-distorted and identical-shaped pulses with a satisfied $\Delta T$.

\vspace{-8pt}	
\section{Microfluidic MC Receiver Analysis and Design}
\label{sec:RX}
In this section, we analyse the T Junction and two reaction channels, and then provide some guidelines on how to design a microfluidic MC receiver.
\vspace{-10pt}	
\subsection{Microfluidic MC Receiver Analysis}
\subsubsection{\textbf{T Junction}}
After information propagation, the transmitted molecules $Y$ from microfluidic transmitter propagates to enter the receiver through Inlet V. Here, we set the location of Inlet V as the position origin and the time that species $Y$ flows into Inlet V as the time origin. Since the transmitted pulse cannot be theoretically characterized, let us assume that a received pulse follows a Gaussian concentration distribution with mean $\mu$ and variance $\sigma^2$, that is
\begin{align} \label{er1}
C_Y(0,t)=\frac{C_{Y_0}^{\text{V}}}{\sqrt{2 \pi \sigma^2}} e^{-\frac{(t-\mu)^2}{2\sigma^2}}.
\end{align}
As the length of one T junction branch $L_T$ is much shorter than that of the following reaction channel, and no reaction happens in a T junction, we further assume the concentration of species $Y$ at T junction I outlet as 
\begin{align} \label{assumption_r1}
C_Y(L_T+L_C,t)\approx \frac{1}{2}C_Y(0,t-t_\text{T}),
\end{align}
where the $\frac{1}{2}$ describes the dilution of species $Y$ by species $ThL$ who is continuously injected into Inlet VI with concentration $C_{ThL}^{\text{VI}}$, and $t_\text{T}=\frac{L_T+L_C}{v_\text{eff}}$ is the travelling time over T junction I via average velocity $v_{\text{eff}}$. Similarly, the outlet concentration of species $ThL$ is assumed as
\begin{align} \label{assumption_r2}
C_{ThL}(L_T+L_C,t)\approx \frac{1}{2}C_{ThL}^{\text{VI}}, ~t\geq t_\text{T}.
\end{align} 
\subsubsection{\textbf{Straight Reaction IV Channel}}
The outflow of T junction I flows through the Reaction IV channel with length $L_4$ to proceed Reaction IV (the thresholding reaction) in \eqref{CR4}, where the portion of species $Y$, whose concentration below $\frac{1}{2}C_{ThL}^{\text{IV}}$, is depleted by reactant $ThL$. With assumptions of \eqref{assumption_r1} and \eqref{assumption_r2}, the concentration of species $Y$ at Reaction IV channel outlet can be expressed using \eqref{approx1} or \eqref{approx2} in \textbf{Theorem 2} by substituting $C_{A_0}$ and $C_{B_0}$ with $C_{Y_0}^{\text{V}}$ and $C_{ThL}^{\text{VI}}$, which yields
\begin{align} \label{appro_r4}
	C_Y(L_T+L_C+L_4,t) \approx   \frac{1}{2}C_{A}^{\text{Appro$_1$}}(L_4,t-t_\text{T})~~\text{or}~~\frac{1}{2}C_{A}^{\text{Appro$_2$}}(L_4,t-t_\text{T}) .
\end{align}

\subsubsection{\textbf{Straight Reaction V Channel}}
After $\text{Reaction IV}$, the remaining species $Y$ flows into the Reaction V channel and catalyses the conversion of species $Amp$ into output species $O$, where $Amp$ is continuously infused with constant concentration $C_{Amp}^{\text{VII}}$ into Inlet VII. As a catalyst, species $Y$ does not react with species $Amp$, and the produced quantity of species $O$ equals the reacting concentration of $Amp$ according to their stoichiometric relation. Considering the dilution at T junction II, the reacting concentration of $Amp$ is diluted to one third of its injected concentration by flows injected at Inlet V and Inlet VI. Based on this and ignoring the diffusion effect in Reaction V channel, the demodulated signal containing species $O$ can be approximated as
\begin{align}\label{eo}
C_{O}(t)=\begin{cases}
\frac{1}{3}C_{Amp}^{\text{VII}}, ~~~&C_{Y}(L_T+2L_C+L_4+L_5,t-\frac{L_C+L_5}{v_\text{eff}})\ge 0 \\
0, &\text{otherwise}.
\end{cases}
\end{align}

\subsubsection{\textbf{Simulation Results}}
To examine the microfluidic receiver analysis, we implement the receiver design in COMSOL (shown in Fig. \ref{frx_design}) with geometric parameters listed in Table \ref{table3}.
\begin{figtab}[t]
	\centering
	\begin{minipage}[b]{0.37\textwidth}
	\centering
	\includegraphics[width=2.5in]{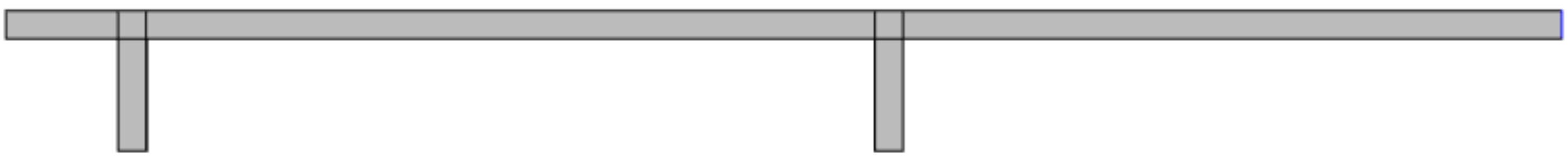}
	\figcaption{The proposed microfluidic receiver implementation in COMSOL.}
	\label{frx_design}
	\end{minipage}
	\quad
	\begin{minipage}[b]{0.58\textwidth}
				\centering
				\tabcaption{The parameters of the proposed microfluidic receiver.}
				\resizebox{190pt}{30pt}{\begin{tabular}{c |c c c c}
						\hline
						Channel & Length($\mu$m) & Width($\mu$m)  & Depth($\mu$m)\\ \hline
						T Junction  & $L_T=80$ & $20$ & $10$  \\ 
						Conjunction  &$L_C=20$ & $20$ & $10$  \\ 
						Reaction IV Channel & $L_4=520$ & $20$ & $10$  \\ 
						Reaction V Channel & $L_5=470$ & $20$ & $10$  \\ \hline
				\end{tabular}}	
			\label{table3}	
	\end{minipage}
\end{figtab}
We set the parameters: $C_{Y_0}^{\text{V}}=3$mol/m$^3$, $\mu=2$, $\sigma^2=0.25$, $k=400$m$^3$/(mol$\cdot$s), $D_\text{eff}=10^{-8}$m$^2$/s, and $v_\text{eff}=0.2$cm/s.

Fig. \ref{f_r1} compares the concentration of species $Y$ at Reaction IV channel outlet with the two approximations in \eqref{appro_r4}. We observe that the two approximations have short delay compared with simulation results due to the propagation during T junction I. The reason for this difference is that even though the T Junction length is shorter than the Reaction IV channel length, the concentration of species $Y$ at T Junction I outlet is not merely a time shift of the received Gaussian concentration in \eqref{er1} as the diffusion effect can broaden the Gaussian concentration through travelling T Junction I.

Fig. \ref{frx3} demonstrates the significant role of $C_{ThL}^{\text{VI}}$ on the width of the demodulated signal $C_O(t)$. As $C_{ThL}^{\text{VI}}$ increases, the width of the demodulated signal decreases. If $C_{ThL}^{\text{VI}} >\max \left\{C_{Y}(0,t) \right\}$, we expect that there is no residual $Y$ in Reaction V channel, so that species $O$ cannot be produced. Fig. \ref{frx4} plots the concentrations of species $O$ at Reaction V channel outlet with different $C_{Amp}^{\text{VII}}$. As expected, the outlet concentration of species $O$ varies with $C_{Amp}^{\text{VII}}$, and approximately equals $\frac{1}{3}C_{Amp}^{\text{VII}}$, which reveals that it is possible to reach any level $C_O$ via adjusting $C_{Amp}^{\text{VII}}$. 

\begin{figure}[t]
	\centering
	\includegraphics[width=3.2in]{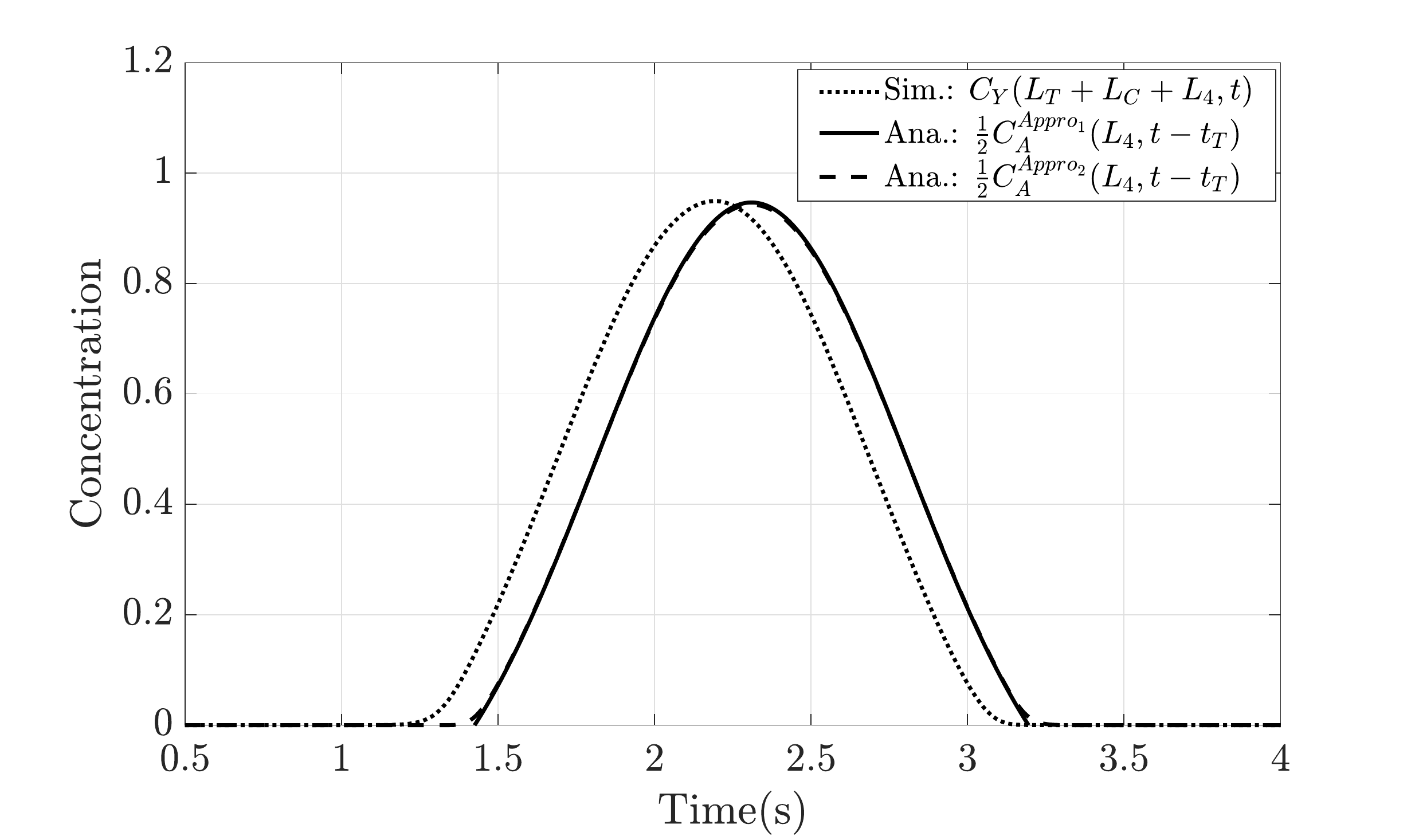}
	\caption{The concentration of species $Y$ at Reaction IV channel outlet with T Junction I.}
	\label{f_r1}
\end{figure}
\begin{figure}[t]
	\centering
	\begin{minipage}[t]{0.47\textwidth}
		\centering
		\includegraphics[width=3in]{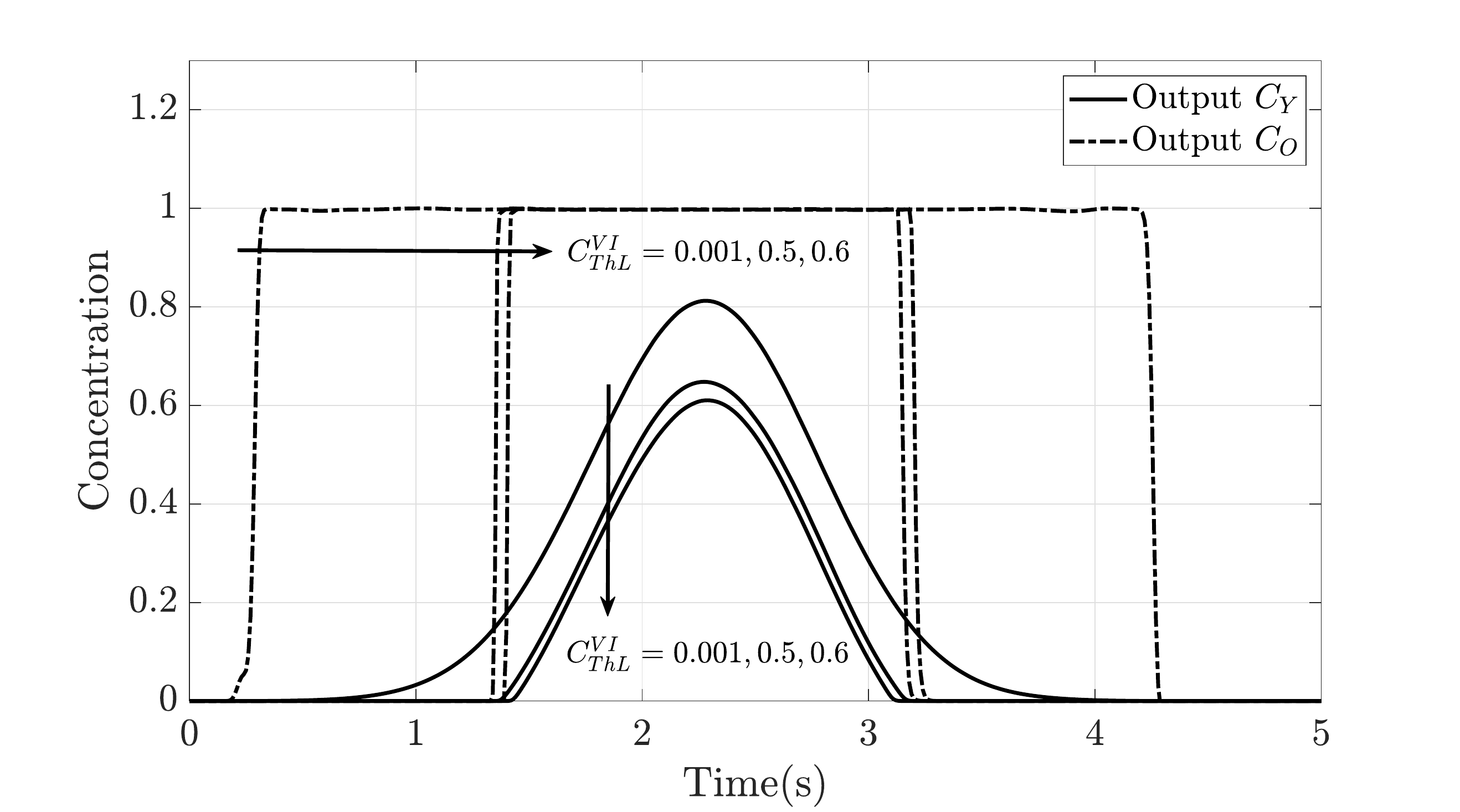}
		\caption{The concentrations of species $Y$ and $O$ at Reaction V channel outlet with different $C_{ThL}^{\text{VI}}$, where the concentration of species $O$ is normalized to $1$mol/m$^3$.}
		\label{frx3}
	\end{minipage}
	\quad
	\begin{minipage}[t]{0.47\textwidth}
		\centering
		\includegraphics[width=3in]{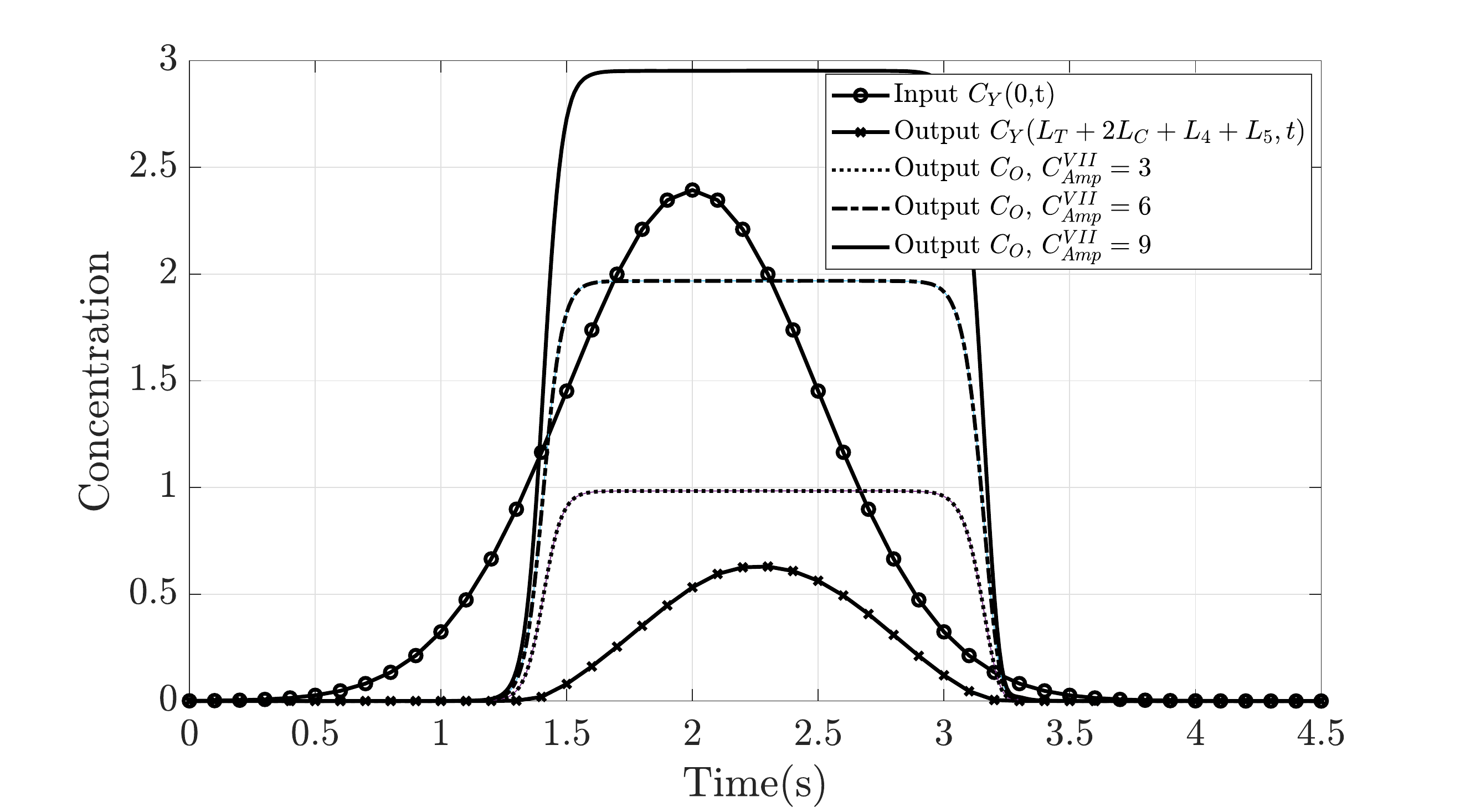}
		\caption{The outlet concentrations of species $O$ at Reaction V channel with different $C_{Amp}^{\text{VII}}$.}
		\label{frx4}
	\end{minipage}
\end{figure}

\vspace{-12pt}
\subsection{Microfluidic MC Receiver Design}
Based on the simulation results in Fig. \ref{frx3} and \ref{frx4}, we conclude two receiver design guidelines. First, the results in Fig. \ref{frx3} reveal that the demodulated signal width is dependent on $C_{ThL}^{\text{VI}}$, and $C_{ThL}^{\text{VI}}$ cannot exceed the maximum concentration of a received pulse, which in turn highlights the necessity and importance to study the maximum concentration control of a generated pulse in Sec. \ref{optimization1}. Second, the results in Fig. \ref{frx4} present the relation between $C_{Amp}^{\text{VII}}$ and $C_O$ follows $C_O=\frac{1}{3}C_{Amp}^{\text{VII}}$. This insight is helpful in concentration detection. If concentration is detected through fluorescence, the relation $C_O=\frac{1}{3}C_{Amp}^{\text{VII}}$ enables us to determine how much $C_{Amp}^{\text{VII}}$ should be injected to ensure fluorescent species $O$ to be captured by a microscopy.

\vspace{-12pt}	
\section{An End-to-End Microfluidic MC Implementation}
\label{MC}
In this section, we combine the microfluidic transmitter with the receiver as proposed in Fig. \ref{bf2} to form a basic end-to-end MC system, where the transmitter and the receiver share the same design parameters as implementations in Fig. \ref{f_deltaT} \subref{f_deltaT5} and Fig. \ref{frx_design}, and the propagation channel is a straight convection-diffusion channel with length $1000\mu$m. 
Considering the reacting concentration of species $Amp$ is diluted to one fourth of its injected concentration $C_{Amp}^{\text{VII}}$ by flows from Y Junction I outlet, Y Junction II outlet, and Inlet VI, we set $C_{Amp}^{\text{VII}}=12$mol/m$^3$ for the purpose of restoring the output concentration level to input concentration of species $X$ injected at Inlet II ($C_{X_0}^{\text{II}}=3$mol/m$^3$).

\begin{figure}[t]
	\centering
	\includegraphics[width=3.2in]{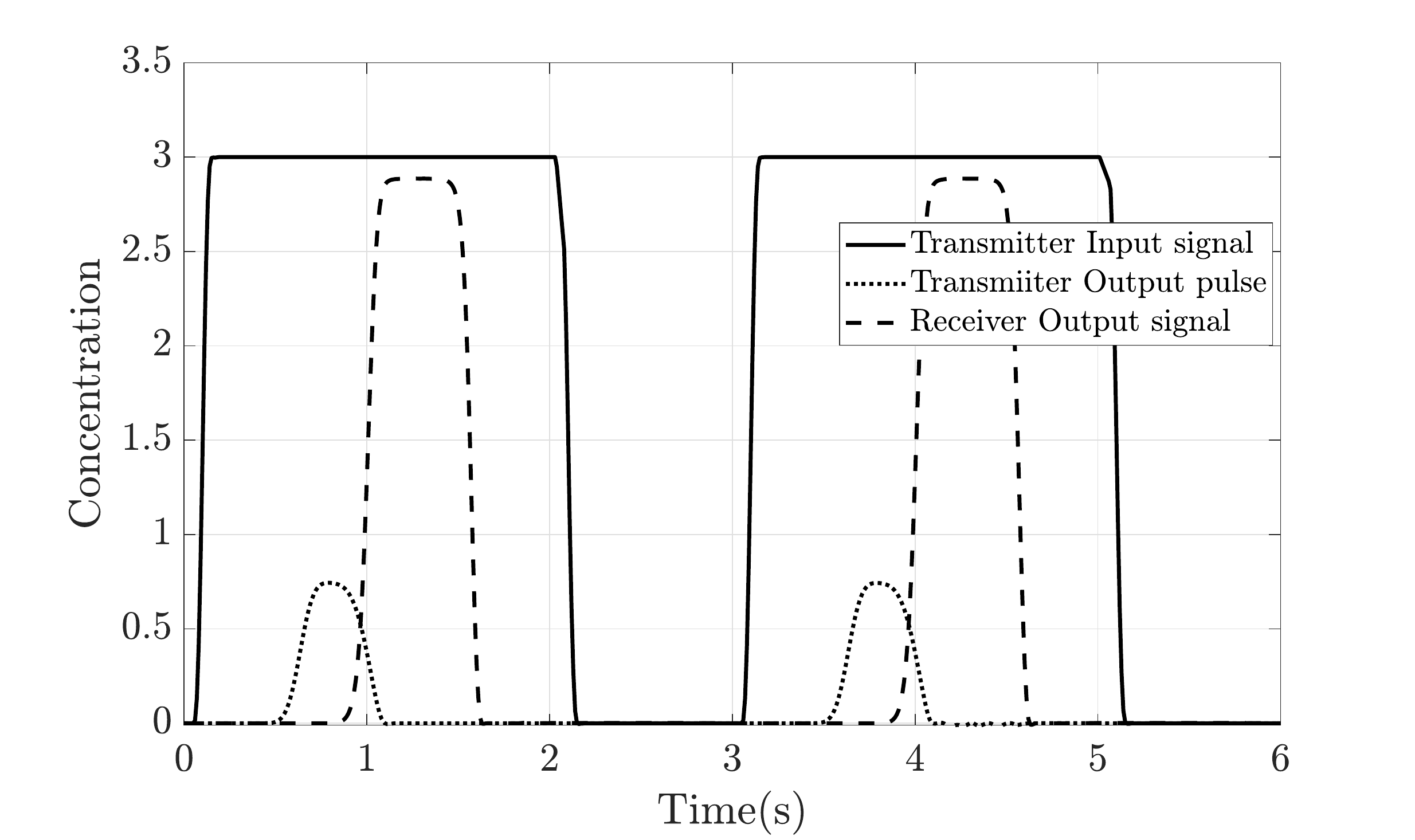}
	\caption{The transmitter input signals, transmitter output pulses, and receiver output signals for the basic end-to-end MC implementation.}
	\label{fsp}
\end{figure} 
Fig. \ref{fsp} plots the transmitter input signals, transmitter output pulses, and receiver output signals. It is clear that two consecutive rectangular signals are successfully demodulated, and this result demonstrates the validity of the end-to-end MC system. Moreover, we observe that although the concentrations of transmitter output pulses are much lower than concentrations of transmitter input signals due to two dilutions occurred on Y Junction output and the conjunction between Reaction I/II channel and Reaction III channel, the concentrations of receiver output signals can approximately restore to the same concentration level of input signal via adjusting $C_{Amp}^{\text{VII}}$.

\vspace{-12pt}	
\section{Conclusion}
\label{sec:conclusion}
In this paper, we expanded our previous work on the design and analysis of a microfluidic MC transmitter capable of generating a pulse-shaped molecular concentration upon a rectangular triggering input with chemical reactions. We further proposed a microfluidic receiver design based on a thresholding reaction and an amplifying reaction to realize a function of demodulating a received signal into a rectangular output signal. Both the proposed designs were based on microfluidic systems with standard and reproducible components, and these microfluidic components were analytically characterized to reveal the dependence of the generated pulse and the demodulated signal on design parameters. For transmitter design optimization, we proposed a reaction channel length optimization flow to control the maximum concentration of output pulse at the transmitter, and then derived a time gap constraint between two consecutive input bits to ensure a continuous transmission of non-distorted and identical-shaped pulses. Finally, we implemented an end-to-end microfluidic MC system through connecting the transmitter with the receiver, and simulation results performed in COMSOL Multiphysics demonstrated successfully pulse generation and signal demodulation, thus effectiveness of the proposed designs. Notably, our proposed microfluidic transceiver will act as fundamental building blocks in the design of future micro/nanoscale MC systems. More importantly, the methodology presented in this paper will inspire the design of additional MC blocks inspired by biochemical processes and based on microfluidic systems.

%

\vspace{-12pt}	
\appendices
\section{Proof of theorem 1}
\label{A}
To solve the spatial-temporal concentration distributions of species $A$ and $AB$, we first define some initial boundary conditions. Species $A$ and $B$ are injected at the inlet of a straight microfluidic channel $x=0$, the first initial boundary condition is
\begin{equation}
\begin{aligned}
C_A(0,t) &= \min\left\{C_{A_0}, C_{B_0} \right\}=C_{0},  ~0\le t \le  T_\text{ON} \\
&= C_0[u(t)-u(t-T_\text{ON})]. \label{reactionb1}
\end{aligned}
\end{equation} 
Here, 
it must be careful that $C_A(0,t)$ may not equal its injected concentration. This is because the one-to-one stoichiometric relation between species $A$ and $B$ in $A+B\to AB$ determines that either the reacting concentration of species $A$ or $B$ equals the smaller supplied concentration, \textit{i.e.}, $C_A(0,t) = \min\left\{C_{A}(0,t), C_{B_0} \right\}$. At $t=0$, the concentration of species $A$ in any positions is zero, thus the second initial boundary condition being 
\begin{align}
C_A(x,0) = 0,  ~x \ge  0. \label{reactionb2}
\end{align}
As the concentration change over locations far away from the source equals zero so that the third boundary condition is
\begin{align}
\frac{{\partial C_A(\infty ,t)}}{{\partial x}} = 0,  ~t \ge  0. \label{reactionb3}
\end{align}

The concentration distribution can be obtained by taking the Laplace transform of \eqref{reaction1}, \eqref{reactionb1}, and \eqref{reactionb3}  using
\begin{align}
\widetilde{C_A}\left( {x,s} \right) = \int_0^\infty  {{e^{ - st}}C_A^{\text{}} \left( {x,t} \right)\mathrm{d}t}.  \label{reactionb4}
\end{align}
The Laplace transform of  \eqref{reaction1} satisfying   \eqref{reactionb2} is
\begin{align}
{D_\text{eff}}\frac{{{\partial ^2}\widetilde{{C_A^{\text{}}}}(x,s)}}{{\partial {x^2}}} - {v_\text{eff}}\frac{{\partial \widetilde{{C_A^{\text{}}}}(x,s)}}{{\partial x}} = \left( {s + {k}{C_{{{0}}}}} \right)\widetilde{{C_A^{\text{}}}}(x,s). \label{reactionb5}
\end{align}
The Laplace transforms of \eqref{reactionb1} and  \eqref{reactionb3} can be expressed 
\begin{align}
\widetilde{{C_A^{\text{}}}}(0 ,s) = \frac{C_0}{s}(1-e^{-T_\text{ON}s}), \label{reactionb55}\\
\text{and} \;\;\frac{{\partial \widetilde{{C_A^{\text{}}}}(\infty ,s)}}{{\partial x}} = 0.\label{reactionb6}
\end{align}
Combining  \eqref{reactionb5},   \eqref{reactionb55}, and  \eqref{reactionb6}, we derive
\begin{align}
\widetilde{{C_A^{\text{}}}}(x,s) = \frac{{C_0}(1-e^{-T_\text{ON}s})}{s}\exp \left[ {\frac{{{v_\text{eff}}x}}{{2{D_\text{eff}}}} - x\sqrt {\frac{{{{v_\text{eff}}^2}}}{{4{{D_\text{eff}}^2}}} + \frac{{s + {k}{C_{{0{}}}^{\text{}}}}}{D_\text{eff}}} } \right]. \label{reactionb7}
\end{align}
Taking the inverse Laplace transform of \eqref{reactionb7}, we derive
\begin{equation}\label{b4_proof}
{C_A^{\text{}}}(x,t)=\begin{cases}
g(x,t), &0\leq t \leq T_\text{ON}\\
g(x,t)-g(x,t-T_\text{ON}), &t> T_\text{ON},
\end{cases}
\end{equation}
where
\begin{equation}
\nonumber
	g(x,t) =\frac{C_0}{2} \left\{ {\exp \left[ {\frac{{\left( {{v_\text{eff}} - \alpha } \right)x}}{{2{D_\text{eff}}}}} \right]\text{erfc}\left[ {\frac{{x - \alpha t}}{{2\sqrt {{D_\text{eff}}t} }}} \right]} \right. \left. { + \exp \left[ {\frac{{\left( {{v_\text{eff}} + \alpha } \right)x}}{{2{D_\text{eff}}}}} \right]\text{erfc}\left[ {\frac{{x + \alpha t}}{{2\sqrt {{D_\text{eff}}t} }}} \right]} \right\}
\end{equation}
with $\alpha  = \sqrt {{{v_\text{eff}}^2} + 4{k}{C_{{0{}}}}{D_\text{eff}}}  $.

To derive the concentration of species $AB$, we combine \eqref{reaction1} and \eqref{reaction1Y} as
\begin{align}
&{D_\text{eff}}\frac{{{\partial ^2}{C_s}(x,t)}}{{\partial {x^2}}} - {v_\text{eff}}\frac{{\partial {C_s}(x,t)}}{{\partial x}} = \frac{{\partial {C_s}(x,t)}}{{\partial t}} , \label{reaction1com}
\end{align}
where ${C_s}(x,t)=  {C_A^{\text{}}}(x,t)+{C_{AB}^{\text{}}}(x,t)$. Interestingly, this equation is the convection-diffusion equation for the total concentration distribution of molecule $A$ and $AB$.
The sum concentration of $A$ and $AB$ follows the three boundary conditions 
\begin{align}
C_s(0,t) = C_0,  ~0\le t \le  T_\text{ON} ,\\
C_s(x,0) = 0,  ~x \ge  0, \\
\text{and} \;\; C_s(\infty ,t) = 0,~  t \ge  0 .
\end{align}
Following \cite[eq. (11)]{Yansha16}, we can derive the molecular concentration as
\begin{equation}\label{b3}
C_s(x,t)=\begin{cases}
h(x,t), &0\leq t \leq T_\text{ON}\\
h(x,t)-h(x,t-T_\text{ON}), &t> T_\text{ON},
\end{cases}
\end{equation}
where $h(x,t)=\frac{{C_{0}}}{2}\left[ {\text{erfc}\left( {\frac{{x - {v_\text{eff}}t}}{{2\sqrt {{D_\text{eff}}t} }}} \right) + {e^{\frac{{{v_\text{eff}}x}}{{D_\text{eff}}}}}\text{erfc}\left( {\frac{{x + {v_\text{eff}}t}}{{2\sqrt {{D_\text{eff}}t} }}} \right)} \right]$. Taking the deduction of ${C_A^{\text{}}}(x,t)$ in  \eqref{b4_proof} from $C_s\left( {x,t} \right)$, we derive the concentration of $AB$ as
\begin{equation}\label{b5_proof}
{C_{AB}}(x,t)=\begin{cases}
h(x,t)-g(x,t), &0\leq t \leq T_\text{ON}\\
[h(x,t)-g(x,t)]-[h(x,t-T_\text{ON})-g(x,t-T_\text{ON})], &t> T_\text{ON}.
\end{cases}
\end{equation}

\section{Proof of theorem 2}
\label{B}
Similar to the Proof of \textbf{Theorem 1}, we first define initial boundary conditions. On the condition of $C_{B_0}>\max\left\{ C_A(0,t) \right\}$ and due to the one-to-one stoichiometric relation between $A$ and $B$, the first initial condition varies with $C_{B_0}$, and can be expressed as
\begin{equation}\label{er3}
C_A(0,t)=\begin{cases}
C_A(0,t), &0\leq t < t_1\\
C_{B_0}, &t_1\leq t <t_2\\
C_A(0,t), &t_2\leq t ,
\end{cases}
\end{equation}
where $t_1$ and $t_2$ are obtained through solving $C_A(0,t)=C_{B_0}$, and finally $t_1 =\mu -\sqrt{-2\sigma^2 \ln \frac{C_{B_0}\sqrt{2\pi \sigma^2}}{C_{A_0}^{\text{}}}}$ and $t_2 =\mu +\sqrt{-2\sigma^2 \ln \frac{C_{B_0}\sqrt{2\pi \sigma^2}}{C_{A_0}^{\text{}}}}$. 
The second and third initial boundary conditions are the same with \eqref{reactionb2} and \eqref{reactionb3}, respectively. Next, we introduce two approximation methods to solve (\ref{reaction1}), where we split the fully coupled convection-diffusion-reaction process into two sequential processes: 1) the reaction process (described by a reaction equation), and 2) the convection/convection-diffusion process (described by a convection/convection-diffusion equation). Under the assumption that $A+B\to AB$ has been finished as soon as species $A$ and $B$ enter a straight microfluidic channel, we can use the solution of the reaction equation as an initial condition for the convection/convection-diffusion equation. 
\vspace{-12pt}
\subsection{The First Approximation Method}
The first method splits (\ref{reaction1}) into a reaction equation and a convection equation by ignoring the diffusion term. The residual concentration of species $A$ is the portion whose concentration is greater than $C_{B_0}^{\text{}}$, and is expressed as 
\begin{align}\label{er6}
C_A(0,t)-C_{B_0}, ~t_1\leq t \leq t_2.
\end{align}
The subsequent transport of species $A$ will be only affected by convection. It has also shown in \cite{hundsdorfer2013numerical} that the convection effect is merely a shift of initial specie profiles in time with velocity $v_\text{eff}$ and without any change of shape, so the outlet concentration of $A$ at the reaction channel can be expressed as
\begin{equation}
	C_{A}^{\text{Appro$_1$}}(x,t)=\begin{cases}C_A(0,t-\frac{x}{v_\text{eff}})-C_{B_0}, &t_1+\frac{x}{v_\text{eff}}\leq t \leq t_2+\frac{x}{v_\text{eff}},\\
	0, &\text{otherwise}.	
	\end{cases}
\end{equation}

\subsection{The Second Approximation Method}    
Different from the first approximation method, the second one takes the diffusion effect into account. 
The convection-diffusion equation with initial condition in (\ref{er6}) and other boundary conditions can be constructed as
\begin{align}
\frac{\partial C_A^{\text{Appro$_2$}}(x,t)}{{\partial}t}=D_{\text{eff}}\frac{\partial^2 C_A^{\text{Appro$_2$}}(x,t)}{{\partial}x^2}-v_{\text{eff}}\frac{\partial C_A^{\text{Appro$_2$}}(x,t)}{{\partial}x}, \label{er8} \\ 
C_A^{\text{Appro$_2$}}(0,t)=C_A(0,t)-C_{B_0}, ~t_1\leq t \leq t_2,\label{er9}\\
C_A^{\text{Appro$_2$}}(x,0)=0, ~x\ge 0, \label{er10}\\
\text{and} \;\;\frac{\partial C_A^{\text{Appro$_2$}}(x,t)}{\partial x}\mid_{x=\infty} =0, ~t\ge 0. \label{er11}
\end{align}
We take the Laplace transform of (\ref{er8}) with respect to $t$ and obtain
\begin{equation}
{D_\text{eff}}\frac{{{\partial ^2}\widetilde{{C_A^{\text{Appro$_2$}}}}(x,s)}}{{\partial {x^2}}} - {v_\text{eff}}\frac{{\partial \widetilde{{C_A^{\text{Appro$_2$}}}}(x,s)}}{{\partial x}} -s \widetilde{{C_A^{\text{Appro$_2$}}}}(x,s)=0. \label{er12}
\end{equation}
The solution of this second order differential equation satisfying 
the Laplace transforms of (\ref{er9}) and (\ref{er11}) is 
\begin{align} \label{er13}
\widetilde{{C_A^{\text{Appro$_2$}}}}(x,s)=l(s)e^{\frac{v_\text{eff}-\sqrt{{v_\text{eff}}^2+4D_{\text{eff}}s}}{2D_{\text{eff}}}x},
\end{align}
where $l(s)$ is a coefficient function and is the Laplace transform of (\ref{er9}), which is
\begin{equation}\label{er14}
\begin{aligned}
l(s)&=\int_{t_1}^{t_2}[C_A(0,t)-C_{B_0}]e^{-st}\mathrm{d}t \\
&={C_{A_0}^{\text{}}}e^{-s \mu}e^{\frac{{(\sigma s)}^2}{2}}[Q({\frac{t_1+\sigma^2 s-\mu}{\sigma}})-Q({\frac{t_2+\sigma^2 s-\mu}{\sigma}})]-\frac{C_{B_0}^{\text{}}}{s}(e^{-st_1}-e^{-st_2}) ,
\end{aligned}
\end{equation}
where $Q(.)$ is the Q-function.

In order to obtain $C_A^{\text{Appro$_2$}}(x,t)$, it is necessary to take the inverse Laplace transform of (\ref{er13}). However, due to the complexity of (\ref{er14}), we cannot derive the closed-form expression $\mathcal{L}^{-1}\left\{C_A^{\text{Appro$_2$}}(x,s) \right\}$. Hence, we employ the Gil-Pelaez theorem \cite{wendel1961non,7511443}. Considering that the Fourier transform of a probability density function (PDF) is its characteristic function, (\ref{er13}) is firstly converted to Fourier transform $\widetilde{{C_A^{\text{Appro$_2$}}}}(x,\omega)$ by substituting $j\omega$ for $s$, and then we regard $\widetilde{{C_A^{\text{Appro$_2$}}}}(x,\omega)$ as the characteristic function of  $\mathcal{L}^{-1}\left\{C_A^{\text{Appro$_2$}}(x,s) \right\}$. The corresponding cumulative distribution function (CDF) can be given in terms of $\widetilde{{C_A^{\text{Appro$_2$}}}}(x,\omega)$ as
\begin{align} \label{er15}
F(t)=\frac{1}{2}-\frac{1}{\pi}\int_{0}^{\infty} \frac{e^{-j\omega t}\overline{\widetilde{{C_A^{\text{Appro$_2$}}}}(x,\omega)}-e^{j\omega t}{\widetilde{{C_A^{\text{Appro$_2$}}}}(x,\omega)}}{2j\omega}\mathrm{d}w.
\end{align}
Taking the derivative of $F(t)$, we derive the inverse Laplace transform and obtain the outlet concentration of speceies $A$ as
\begin{align} \label{er16}
C_A^{\text{Appro$_2$}}(x,t)=\frac{1}{2 \pi}\int_0^\infty [e^{-j\omega t}\overline{\widetilde{{C_A^{\text{Appro$_2$}}}}(x,\omega)}+e^{j\omega t}{\widetilde{{C_A^{\text{Appro$_2$}}}}(x,\omega)}]\mathrm{d}w.
\end{align}

\ifCLASSOPTIONcaptionsoff
  \newpage
\fi

\end{document}